\newtheorem{theorem}{Theorem}
\begin{document}
\title{ MIMO-Aided Nonlinear Hybrid Transceiver Design for Multiuser mmWave Systems Relying on Tomlinson-Harashima Precoding }
\author{Kaidi Xu, Yunlong Cai, Minjian Zhao, Yong Niu, and Lajos Hanzo
\thanks{
K. Xu, Y. Cai, and M. Zhao    are with the College of Information Science and Electronic Engineering, Zhejiang University, Hangzhou 310027, China (e-mail: xukaidi13@126.com; ylcai@zju.edu.cn; mjzhao@zju.edu.cn).

Y. Niu is with the State Key Laboratory of Rail Traffic Control and Safety, Beijing Jiaotong University, Beijing 100044, China (e-mail: niuy11@163.com).

L. Hanzo is with the Department of ECS, University of Southampton, U.K. (e-mail:
lh@ecs.soton.ac.uk).
}
}

\maketitle
\vspace{-4.8em}

\begin{abstract}
  Hybrid analog-digital (A/D) transceivers designed for millimeter
  wave (mmWave) systems have received substantial research attention,
  as a benefit of their lower cost and modest energy consumption
  compared to their fully-digital counterparts.  We further improve
  their performance by conceiving a Tomlinson-Harashima precoding
  (THP) based nonlinear joint design for the downlink of multiuser
  multiple-input multiple-output (MIMO) mmWave systems. Our
  optimization criterion is that of minimizing the mean square error
  (MSE) of the system under channel uncertainties subject both to
  realistic transmit power constraint and to the unit modulus
  constraint imposed on the elements of the analog beamforming (BF)
  matrices governing the BF operation in the radio frequency domain.  We
  transform this optimization problem into a more tractable form and
  develop an efficient block coordinate descent (BCD) based algorithm
  for solving it.  Then, a novel two-timescale nonlinear joint hybrid
  transceiver design algorithm is developed, which can be viewed as an
  extension of the BCD-based joint design algorithm for reducing both
  the channel state information (CSI) signalling overhead and the
  effects of outdated CSI.  Moreover, we determine the near-optimal
  cancellation order for the THP structure based on the lower bound of
  the MSE.  The proposed algorithms can be guaranteed to converge to a
  Karush-Kuhn-Tucker (KKT) solution of the original problem.  The
  simulation results demonstrate that our proposed nonlinear joint
  hybrid transceiver design algorithms significantly outperform the
  existing linear hybrid transceiver algorithms and approach the
  performance of the fully-digital transceiver, despite its lower
  cost and power dissipation.
\end{abstract}
\vspace{-1.5em}
\begin{IEEEkeywords}
Nonlinear precoding, hybrid A/D beamforming, hardware-efficient, mmWave, two-timescale.
\end{IEEEkeywords}

\section{Introduction}
The global spectrum shortage has stimulated considerable interest in
the development of millimeter wave (mmWave) communications for the
next generation wireless networks~\cite{pi2011introduction,
  rappaport2013millimeter, heath2016overview, hemadeh2018millimeter,
  he2020propagation, he2018geometrical}.  At a carrier frequency of
30~GHz or 1cm wavelength, numerous antenna elements can be packed into
a compact space. This facilitates large-scale spatial multiplexing and
high-gain directional beamforming (BF) and thereby significantly
increases the system capacity.  However, for large-scale
multiple-input multiple-output (MIMO) mmWave systems the conventional
fully-digital (FD) BF architecture requires numerous radio frequency
(RF) chains which results in extremely high fabrication cost and high
power consumption.  In order to circumvent these drawbacks, hybrid
analog-digital (A/D) BF architectures have been proposed, which
require less RF chains than the FD BF architecture, when nusing the
same number of antennas \cite{liu2017millimeter, li2019explore,
  sohrabi2016hybrid, yu2016alternating, ni2017near, chen2015iterative,
  mendez2015dictionary, ayach2014spatially, alkhateeb2015limited,
  zhai2017joint,he2017codebook,lin2017subarray, shi2018spectral,
  cai2019robust, liu2019stochastic, chen2019randomized,
  shi2017penalty}.


In~\cite{liu2017millimeter}, the authors analyzed the beam-alignment
performance of both exhaustive and hierarchical search techniques,
with the time-domain training overhead taken into account.  An
optimized two-stage search algorithm was proposed in
\cite{li2019explore} for transmitter and receiver beam alignment.  In
\cite{sohrabi2016hybrid}, the authors established that a hybrid A/D BF
structure with twice as many RF chains as data streams is capable of
realizing any FD BF structure exactly.  A series of
matrix-decomposition based hybrid BF design algorithms have been
proposed in \cite{ni2017near, chen2015iterative,
  mendez2015dictionary}.  By exploiting the sparse nature of the
channel matrix, the authors of~\cite{ayach2014spatially} formulated
the hybrid BF design problem as a sparse matrix reconstruction problem
and solved it using the modified orthogonal matching pursuit (OMP)
algorithm.  In~\cite{alkhateeb2015limited}, the authors investigated a
hybrid transceiver design using realistic limited feedback in their
multi-user mmWave systems.  As a further advance, the authors
of~\cite{yu2016alternating} have developed an alternating minimization
algorithm for their hybrid BF design with the aid of manifold
optimization (MO).  The authors of~\cite{zhai2017joint} considered the
uplink of large-scale multiuser MIMO mmWave systems, where the
implementation cost of their joint hybrid BF algorithm was reduced
with the aid of antenna selection.  In order to mitigate the
hardware-induced performance erosion, a number of codebook-based
hybrid BF algorithms were conceived
in~\cite{he2017codebook,lin2017subarray}.  In
\cite{shi2018spectral,cai2019robust} the unit-modulus constraint and
power constraints imposed upon the A/D hybrid BF were mitigated by the
penalty dual decomposition (PDD)~\cite{shi2017penalty} based hybrid BF
design algorithm, which can be guaranteed to achieve the
Karush-Kuhn-Tucker (KKT) solution.  In particular, the authors of
\cite{shi2018spectral} directly optimized the spectral efficiency of
the mmWave downlink in a multiuser multistream MIMO system. Then Cai
\textit{et al.}~\cite{cai2019robust} extended the solution advocated
in~\cite{shi2018spectral} to a mmWave full-duplex MIMO relay-aided
system.  As another development, both the channel state information
(CSI) feedback overhead and the implementation complexity were reduced
as part of a series of two-timescale based studies for the design of
A/D hybrid BF~\cite{liu2019stochastic, chen2019randomized,
  mai2018two}. Explicitly, the long-timescale analog BF matrices were
optimized based on the channel statistics, while the short-timescale
digital precoding matrices were updated according to the
near-instantaneous CSI.



In parallel to the low-complexity linear transceiver structures, more
sophisticated nonlinear transceivers, such as the Tomlinson-Harashima
precoding (THP) have also evolved from the seminal contributions
of~\cite{tomlinson1971new,harashima1972matched}, leading to powerful
spatial-domain MIMO solutions~\cite{fischer2002space}.  The THP-based
nonlinear transceiver algorithms have a remarkable performance gain
over their linear counterparts and thus have found numerous
applications \cite{geng2015robust,zarei2018robust,tseng2011joint}.
However, determining the optimal cancellation order under the THP
structure, which achieves the optimal performance gain is quite a
challenge~\cite{liu2007improved}.  In \cite{zu2014multi}, the authors
proposed a multi-branch (MB) THP scheme, where each branch contains
a THP with a predefined ordering strategy, and a selection criterion
is applied to choose the best branch to generate the final output.
Moreover, the THP-based robust nonlinear transceiver design has also
been further developed by taking the CSI errors into account in
relay-aided multiuser MIMO systems~\cite{zhang2014robust}. This
solution has also been extended to a full-duplex relay-aided wireless
power transfer system in~\cite{zhang2017nonlinear}.  Finally, the
authors of~\cite{gao2018optimization, masouros2012interference}
proposed techniques for reducing the power-loss imposed by the modulo and
feedback operations used in the THP.

However, to the best of our knowledge, the aforementioned A/D hybrid
transceiver design algorithms are all based on the linear precoding
structure, which suffers from the performance degradation caused by
the multiuser interference and by the reduced number of available RF
chains.  Against this background, we propose a THP-based joint A/D
hybrid transceiver design algorithm for the downlink of multiuser
mmWave MIMO systems for further improving the system
performance. Specifically, we jointly optimize the analog BF matrices
and the digital processing matrices, i.e., the digital precoding and
the receiver as well as feedback matrices of the THP
structure. Explicitly, we minimize the system's mean square error (MSE)
subject to both the transmit power constraint and the unit modulus
constraint imposed on each element of the analog RF BF matrices.  The
optimization problem formulated is quite challenging to tackle.  By
efficiently exploiting the particular structure of this problem, we first
transform it into a more tractable form.  Then we propose an
efficient block coordinate descent (BCD) based algorithm for solving the
converted problem.  Furthermore, we extend the proposed BCD-based
joint design algorithm to a novel two-timescale nonlinear joint hybrid
transceiver design algorithm in order to reduce both the CSI signalling
overhead and the effects of outdated CSI caused by its feedback delay.
The proposed algorithms can be guaranteed to obtain a KKT solution of
the original problem.

The main contributions of this work are summarized as follows:
\begin{enumerate}


\item There is a paucity of literature on optimizing the nonlinear A/D
  hybrid transceiver matrices by minimizing the MSE, because this
  problem is very challenging.  Hence we first transform this problem into a
  more tractable form and optimize the matrix variables in a BCD
  fashion, where the subproblems of each block can be solved in closed
  form.



\item

We develop a novel two-timescale nonlinear hybrid transceiver design
algorithm based on two-stage online successive convex approximation
(TOSCA). Although the proposed TOSCA-based two-timescale algorithm
suffers from a certain performance degradation compared to the
proposed BCD-based joint design algorithm in the presence of small
delays, both the CSI signalling overhead and the effects of outdated CSI
caused by high CSI-feedback delays can be substantially reduced. In
this scheme, the long-timescale analog BF matrices are optimized based
on the channel statistics, while the short-timescale digital processing
matrices are designed based on the low-dimensional effective CSI
matrices for each time slot.


\item We determine the near-optimal cancellation order for the
  proposed THP-based hybrid transceiver design based on the lower
  bound of the MSE.  Our simulation results demonstrate that the
  proposed BCD-based joint nonlinear hybrid transceiver design
  algorithm significantly outperforms the existing linear hybrid
  transceiver algorithms and approaches the performance of the
  fully-digital transceiver. Furthermore, compared to the proposed
  BCD-based joint design algorithm, the proposed two-timescale joint
  design algorithm provides better performance in the scenario of
  severe CSI delays, although it suffers from some performance
  degradation for small delays.
\end{enumerate}

The rest of this paper is structured as follows. Section
\ref{system_model} introduces the proposed THP-based mmWave multiuser
MIMO system and the optimization problems formulated.  In Section
\ref{HBF_design}, we first transform the problem into a more tractable
form and then propose a BCD-based joint design algorithm to solve it.
In Section \ref{TTS_design}, we propose the TOSCA-based two-timescale
joint nonlinear transceiver design algorithm.
In Section \ref{cancellation_order}, we derive the lower bound of the
MSE and determine the near-optimal cancellation order for the proposed
THP-based hybrid transceiver design. Our simulation results are presented
in Section \ref{simulation_results}. Finally, Section \ref{conclusion}
offers our conclusions.

\emph{Notations:} Scalars, vectors and matrices are respectively
denoted by lower case, boldface lower case and boldface upper case
letters.  $\mathbf{I}$ represents an identity matrix and $\mathbf{0}$
denotes an all-zero matrix.  For a matrix $\mathbf{A}$,
${{\bf{A}}^T}$, $\mathbf{A}^*$, ${{\bf{A}}^H}$ and $\|\mathbf{A}\|$
denote its transpose, conjugate, conjugate transpose and Frobenius
norm, respectively. For a square matrix $\bf{A}$, $\textrm{Tr}
(\bf{A})$ denotes its trace, ${\bf{A}} \succeq {\bf{0}}~({\bf{A}}
\preceq {\bf{0}})$ means that $\bf{A}$ is positive (negative)
semidefinite.
$[\mathbf{A}]_{a:b\,,c:d}$ represents a submatrix of $\mathbf{A}$.
For a vector $\mathbf{a}$, $\|\mathbf{a}\|$ represents its Euclidean norm.
$\mathbb{E}\{.\}$ denotes the statistical expectation.
$\text{Re}(.)$ ($\text{Im}(.)$) denotes
the real (imaginary) part of a variable.
The operator $\textrm{vec}( \cdot )$ stacks the elements of a matrix in one long column vector.
 $|  \cdot  |$ denotes  the absolute value of a complex scalar.
 The operator $\angle$ takes the phase angles of the elements in a matrix.
${\mathbb{C}^{m \times n}}\;({\mathbb{R}^{m \times n}})$ denotes the space of ${m \times n}$ complex (real) matrices.
The symbol $\otimes$ denotes the Kronecker product of two vectors/matrices.

\begin{figure*}	
	\centering
	\includegraphics[scale=0.7]{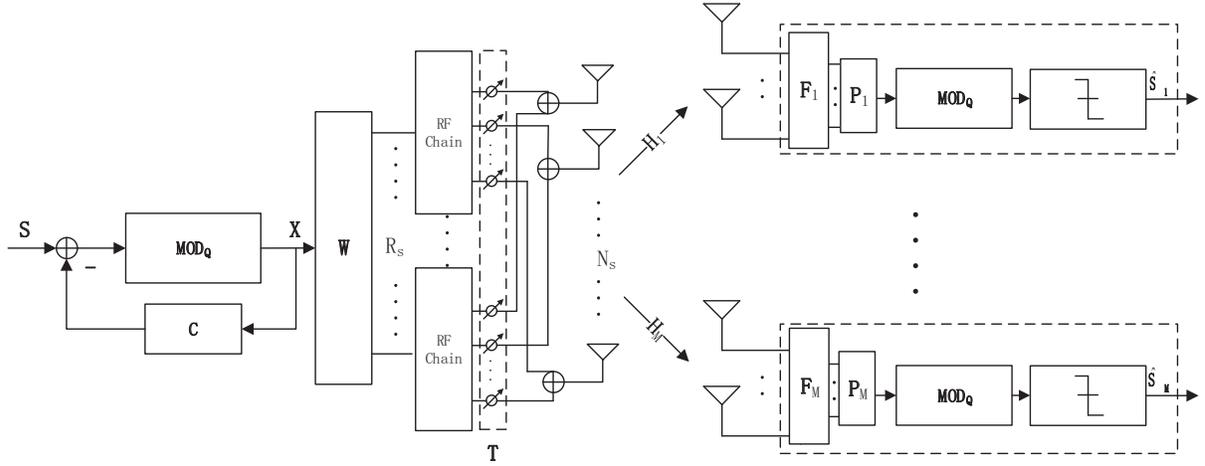}	
	\caption{System model}
	\label{SystemModel}
\end{figure*}
\section{System model and problem formulation}
\label{system_model}

In this section, we first introduce the system model of mmWave
multiuser MIMO systems, and then mathematically formulate the
optimization problem of interest.
\subsection{System model}

As illustrated in Fig. \ref{SystemModel}, we consider a mmWave
communication system comprising of one BS and $M$ users.  The BS
equipped with $N_s$ antennas and $R_s$ RF chains, where $N_s\geq R_s$,
transmits a signal vector
$\mathbf{s}=[\mathbf{s}_1^T,\mathbf{s}_2^T,\ldots,\mathbf{s}_M^T]^T\in\mathbb{C}^{D\times
  1}$ to the users, where $\mathbf{s}_m \in \mathbb{C}^{D_{m}\times
  1}$ denotes the signal vector for user $m$,
$m\in\mathcal{M}\triangleq \{1, 2, ..., M\}$, and $D = \sum_{m=1}^M
D_m$ denotes the total number of transmit data streams.  User $m$ is
equipped with $N_{d,m}$ antennas and $R_{d,m}$ RF chains. Besides, we
assume the necessary condition $R_s\geq D$ for sufficient degree of
freedom.  Each entry of the transmitted signal vector $\mathbf{s}$ is
a Q-ary quadrature amplitude modulation (QAM) signal.  Hence the real
and imaginary parts of each entry of the signal vector $\mathbf{s}$
are independent and identically distributed (i.i.d.) random variables
generated from a Q-ary QAM alphabet $\mathcal{A}$.  Specifically, we
let $\mathcal{A} =
\{\pm\sqrt{\frac{3}{2(Q-1)}},\pm3\sqrt{\frac{3}{2(Q-1)}},
\pm5\sqrt{\frac{3}{2(Q-1)}}, \ldots,
\pm(\sqrt{Q}-1)\sqrt{\frac{3}{2(Q-1)}}\}$,
$\text{E}\{\mathbf{s}\}=\mathbf{0}$ and
$\text{E}\{\mathbf{s}\mathbf{s}^H\}=\mathbf{I}$, where
$\text{Re}(s_k)\in\mathcal{A},\,\forall k$,
$\text{Im}(s_k)\in\mathcal{A},\,\forall k$ and $s_k$ is the $k$th
entry of the vector $\mathbf{s}$.

$\text{MOD}_Q(.)$ in Fig. \ref{SystemModel} is a modulo operator used
to constrain a value in $(-\sqrt{\tau},\sqrt{\tau}]$, where $\tau =
  \sqrt{\frac{3Q}{2(Q-1)}}$. This operator can be formulated as
\begin{equation}
	\text{MOD}_Q(x)=x-2\sqrt{\tau}\lfloor \frac{x+\sqrt{\tau}}{2\sqrt{\tau}} \rfloor = x+e,
	\label{mod}
\end{equation}
where $e$ is the residual error.

With the modulo operator in \eqref{mod}, we can generate the transmit
symbols $x_k$ successively as
\begin{equation}	\
	x_k = s_k-\sum_{n=1}^{k-1}[\mathbf{C}]_{k,n}x_n + e_k,
	\label{channalsymbol}
\end{equation}
where $\mathbf{C} \in \mathbb{C}^{D \times D }$ is a strictly lower
triangle matrix and $\mathbf{e} = [e_1,e_2,\ldots,e_M]^T$ is the
residual error vector generated by the modulo operator. Equation
\eqref{channalsymbol} can be rewritten in a matrix form as
\begin{equation}
	\mathbf{x}=\mathbf{U}^{-1}\mathbf{v},
\end{equation}
where $\mathbf{U} = \mathbf{I}+\mathbf{C}$ is a lower triangle matrix
with ones on the main diagonal and
$\mathbf{v}=\mathbf{s}+\mathbf{e}=[\mathbf{v}_1^T,\mathbf{v}_2^T,\ldots,\mathbf{v}_M^T]^T$
is the target signal vector\footnote{This is because the users can
  reconstruct $\mathbf{s}$ form $\mathbf{v}$ via
  $\mathbf{s}=\text{MOD}_Q(\mathbf{v})$.}.  Together with the
consideration in \cite{shenouda2008framework}, we have
$\text{E}\{\mathbf{x}\mathbf{x}^H\}=\mathbf{I}$ for a high order
$Q$-ary QAM constellation.

Before transmission, the processed signal $\mathbf{x}$ is passed
through a linear digital precoding matrix $\mathbf{W} \in
\mathbb{C}^{R_s \times D}$ followed by an analog BF matrix $\mathbf{T}
\in \mathbb{C}^{N_s \times R_s}$. The transmit power constraint at the
BS is given by
\begin{equation}
\text{E}\{\|\mathbf{T}\mathbf{W}\mathbf{x}\|^2\}=\|\mathbf{TW}\|^2\leq P_t,\label{power_constraint}
\end{equation}
where $P_t$ is the transmit power budget. The signal received at user $m$ is given by
\begin{equation}
	\mathbf{y}_m=\mathbf{H}_m\mathbf{TWx}+\mathbf{n}_m,
\end{equation}
where $\mathbf{H}_m\in \mathbb{C}^{N_{d,m}\times N_s}$ denotes the
MIMO channel matrix between the BS and user $m$, while $\mathbf{n}_m$
denotes the complex-valued circular Gaussian noise at user $m$ with
zero mean and correlation matrix
$\text{E}\{\mathbf{n}_m\mathbf{n}_m^H\}=\sigma_m^2 \mathbf{I}$.

At user $m$, a linear A/D hybrid receiver consisting of an analog BF
matrix $\mathbf{F}_m \in \mathbb{C}^{R_{d,m} \times N_{d,m}}$ and a
digital receiving matrix $\mathbf{P}_m \in \mathbb{C}^{D_m \times
  R_{d,m}}$ is employed for detecting symbols.  The output of the
hybrid receiver is expressed as
\begin{equation}
	\hat{\mathbf{v}}_m=\mathbf{P}_m\mathbf{F}_m\mathbf{H}_m\mathbf{TWx}+\mathbf{P}_m\mathbf{F}_m\mathbf{n}_m,
\end{equation}
while the final estimate of the signal vector for user $m$ is given by
\begin{equation}
	\hat{\mathbf{s}}_m=\text{MOD}_Q(\hat{\mathbf{v}}_m).
\end{equation}

In practice, channel estimation errors are inevitable.  According to
\cite{lee2012effect}, the channel estimation errors can be modelled as
\begin{equation}
	\mathbf{H}_m=\bar{\mathbf{H}}_m+\sigma_{e,m}\Delta\mathbf{H}_m \quad\forall m,
\end{equation}
where $\bar{\mathbf{H}}_m \in \mathbb{C}^{N_{d,m}\times N_s}$ denotes
the estimated channel matrix, $\Delta\mathbf{H}_m$ denotes the channel
estimation error matrix, and $\sigma_{e,m}$ denotes the estimation
error variance. Specifically, $\Delta\mathbf{H}_m$ is i.i.d. with
zero-mean and unit-variance circular complex Gaussian distribution.

Furthermore, the ordering scheme for the THP structure is considered
as a matrix $\mathbf{L}\in \mathbb{R}^{D\times D}$ whose elements are
zeros and ones.  The ordering matrix $\mathbf{L}$ follows the
constraints $\mathbf{L1}=\mathbf{1}$,
$\mathbf{1}^T\mathbf{L}=\mathbf{1}^T$, that is, in each row and column
only one entry is 1 and the others are 0s.  Hence the permutation
process can be expressed as $\mathbf{s}=\mathbf{L}\mathbf{\tilde{s}}$,
where $\mathbf{\tilde{s}}$ is the original transmit data vector and
$\mathbf{s}$ is the permutated data vector.  Then we have the desired
output signal vector of the linear receiver for user $m$ is
$\mathbf{\tilde{v}}_m = \mathbf{A}_m\mathbf{L}^T\mathbf{v}$, where
$\mathbf{A}_m=[\mathbf{0}_{D_{m}\times\sum_{i=1}^{m-1}D_i},\mathbf{I}_{D_m},\mathbf{0}_{D_m\times\sum_{i=m+1}^MD_i}]$
denotes a selection matrix extracting the entries of user $m$ in
vector $\mathbf{L}^T\mathbf{v}$.\footnote{This is because the desired
  signal vector for user $m$ is
  $\mathbf{A}_m\mathbf{L}^T\mathbf{s}=\mathbf{A}_m\mathbf{L}^T\text{MOD}_Q(\mathbf{v})=\text{MOD}_Q(\mathbf{A}_m\mathbf{L}^T\mathbf{v})$.}
Finally, the MSE at user $m$ can be expressed as
\begin{equation}
\begin{aligned}
	\text{MSE}(\{\mathbf{P}_m,\mathbf{F}_m\},\mathbf{T},\mathbf{W},\mathbf{U})=&\sum_{m=1}^M\text{E}\{\|\hat{\mathbf{v}}_m - \tilde{\mathbf{v}}_m\|^2\}\\
	=&\sum_{m=1}^M \text{tr}\bigg(\mathbf{P}_m\mathbf{F}_m\mathbf{\bar{H}}_m\mathbf{TWW}^H\mathbf{T}^H\mathbf{\bar{H}}_m^H\mathbf{F}_m^H\mathbf{P}_m^H\\
	&+\sigma_{e,m}^2\text{tr}(\mathbf{TWW}^H\mathbf{T}^H)\mathbf{P}_m\mathbf{F}_m\mathbf{F}_m^H\mathbf{P}_m^H +\sigma_m^2\mathbf{P}_m\mathbf{F}_m\mathbf{F}_m^H\mathbf{P}_m^H\\
	&-\mathbf{P}_m\mathbf{F}_m\bar{\mathbf{H}}_m\mathbf{TWU}^H\mathbf{L}\mathbf{A}_m^H-\mathbf{A}_m\mathbf{L}^T\mathbf{U}\mathbf{W}^H\mathbf{T}^H\mathbf{\bar{H}}_m^H\mathbf{F}_m^H\mathbf{P}_m^H\bigg)\\
	&+\text{tr}(\mathbf{U}\mathbf{U}^H),\label{Lobj}
\end{aligned}
\end{equation}
where the expectation here is taken over the random variables $\{\Delta\mathbf{H}_m,\mathbf{n}_m\}$.

\subsection{Problem formulation}
\subsubsection{Joint design problem}
With the expression of MSE shown in \eqref{Lobj} and the power
constraint shown in \eqref{power_constraint}, we are now able to
formulate the proposed THP-based hybrid transceiver design problem.
We aim to jointly design the digital precoding and feedback matrices
in the THP structure and the analog BF matrices to minimize the MSE,
hence this problem can be formulated as follows
\begin{subequations}
\begin{align}
	\min_{\{\mathbf{P}_m,\mathbf{F}_m\},\mathbf{T},\mathbf{W},\mathbf{U}}&\quad \text{MSE}(\{\mathbf{P}_m,\mathbf{F}_m\},\mathbf{T},\mathbf{W},\mathbf{U})\\
	s.t.\qquad &|[\mathbf{F}_m]_{i,j}|=1 \quad \forall m,i,j,\label{abs_constraint1}\\	
	&|[\mathbf{T}]_{i,j}|=1 \quad \forall i,j,\label{abs_constraint2} \\
    & \|\mathbf{TW}\|^2\leq P_t ,\label{power_constraint10}
\end{align}\label{original_problem}%
\end{subequations}
where the constant modulus constraints given by
\eqref{abs_constraint1} and \eqref{abs_constraint2} are due to the
fact that the analog beamformer is implemented using low-cost phase
shifters.

\subsubsection{Two-timescale joint design problem}
In practice, the analog BF matrices can also update over a longer
timescale than the digital processing matrices aiming at reducing the
feedback overhead needed for the exchange of CSI. Specifically, the
long-timescale variables, i.e., the analog BF matrices, are designed
based on the slowly varying channel statistics\footnote{The channel
  statistics refer to the distribution of channel fading
  realizations. We only need to obtain a single (potentially outdated)
  channel sample at each frame, based on which the analog BF matrices
  can be updated directly.} while the short-timescale variables, i.e.,
the digital processing matrices, are optimized based on the
instantaneous effective low-dimensional CSI matrices.

In particular, as illustrated in Fig. \ref{timeline}, the time axis is
divided into some super-frames within which the channel statistics
remains coherent.  Each super-frame consists of $T_f$ frames, each of
which is made up of $T_s$ time slots.  Within each time slot, the
instantaneous effective CSI remains unchanged.  During the
implementation of our proposed two-timescale algorithm, the
long-timescale variables are updated at the end of each frame based on
a channel sample, while the short-timescale variables are updated at
the beginning of each time slot based on the instantaneous effective
CSI.  Consequently, we formulate our two-timescale optimization
problem as
\begin{equation}
\begin{aligned}
	\min_{\{\mathbf{F}_m\},\mathbf{T},\boldsymbol{\Theta}}&\quad f(\{\mathbf{F}_m\},\mathbf{T},\boldsymbol{\Theta})\triangleq\mathbb{E}_{\mathbf{\bar{H}}_m}\{\text{MSE}(\{\mathbf{P}_m,\mathbf{F}_m\},\mathbf{T},\mathbf{W},\mathbf{U})\}\\
	\text{s.t.}&\quad \eqref{abs_constraint1}-\eqref{power_constraint10},
\end{aligned}\label{TTS_problem}
\end{equation}
where $\boldsymbol{\Theta}\triangleq\{\{\mathbf{P}_m\},\mathbf{W},\mathbf{U}\}$ denotes a collection of the short-timescale variables and the expectation here is taken over the channel samples $\{\mathbf{\bar{H}}_m\}$ within a super-frame.

\begin{figure}
\centering
\includegraphics[scale=0.7]{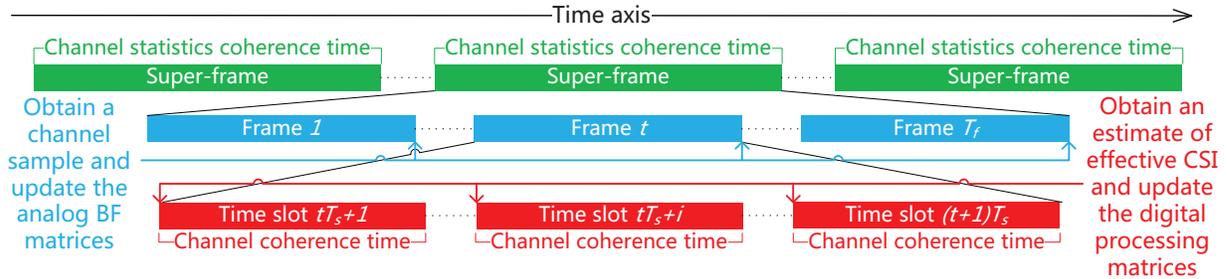}
\caption{Decomposition of the time axis into two timescales}
\label{timeline}
\end{figure}

\section{Proposed Hybrid Transceiver Joint Design Algorithm}
\label{HBF_design}
In this section, we first transform problem \eqref{original_problem}
into a more tractable form and then propose a novel iterative
BCD-based algorithm to efficiently solve the converted
problem. Subsequently, we carry out the convergence and computational
complexity analyses for the proposed algorithm.

\subsection{Problem transformation}
Problem \eqref{original_problem} is hard to solve due to the highly
coupled and nonconvex constraints.  Hence, we provide Theorem
\ref{sigma_trans} to simplify problem \eqref{original_problem}.
\begin{theorem}\label{sigma_trans}
	The scaled KKT solution $(\{\mathbf{\bar{P}}_m^{\star},\mathbf{F}_m^{\star}\},\mathbf{T}^{\star},\mathbf{\bar{W}}^{\star},\mathbf{U}^{\star})\triangleq$ $(\{\frac{1}{a} \sigma_m \mathbf{P}_m^{\star},\mathbf{F}_m^{\star}\},\mathbf{T}^{\star},a\mathbf{W}^{\star},\mathbf{U}^{\star})$, noted as $\mathcal{\bar{S}}^{\star}$, of the following problem is a KKT solution of problem \eqref{original_problem},
	\begin{equation}
		\begin{aligned}
		\min_{\mathcal{S}}&\quad \text{MSE}_{\sigma}(\{\mathbf{P}_m,\mathbf{F}_m\},\mathbf{T},\mathbf{W},\mathbf{U})\\
		\text{s.t.} &\quad \eqref{abs_constraint1}, \eqref{abs_constraint2},
		\end{aligned}\label{trans_problem}
	\end{equation}
	where $\mathcal{S}\triangleq\{\{\mathbf{P}_m,\mathbf{F}_m\},\mathbf{T},\mathbf{W},\mathbf{U}\}$, $(\{\mathbf{P}_m^{\star},\mathbf{F}_m^{\star}\},\mathbf{T}^{\star},\mathbf{W}^{\star},\mathbf{U}^{\star})$ is a KKT solution of problem \eqref{trans_problem}, $a=\frac{\sqrt{P_t}}{\|\mathbf{T}^{\star}\mathbf{W}^{\star}\|}$ denotes the scaling factor, and the objective function is given by
	\begin{equation}
\begin{aligned}
	&\text{MSE}_{\sigma}(\{\mathbf{P}_m,\mathbf{F}_m\},\mathbf{T},\mathbf{W},\mathbf{U})\\
	=&\sum_{m=1}^M \text{tr}\bigg(\mathbf{P}_m\mathbf{F}_m\mathbf{\hat{H}}_m\mathbf{TWW}^H\mathbf{T}^H\mathbf{\hat{H}}_m^H\mathbf{F}_m^H\mathbf{P}_m^H
	+\hat{\sigma}_{e,m}^2\text{tr}(\mathbf{TWW}^H\mathbf{T}^H)\mathbf{P}_m\mathbf{F}_m\mathbf{F}_m^H\mathbf{P}_m^H \\
	&+\frac{1}{P_t}\text{tr}(\mathbf{TWW}^H\mathbf{T}^H)\mathbf{P}_m\mathbf{F}_m\mathbf{F}_m^H\mathbf{P}_m^H-\mathbf{P}_m\mathbf{F}_m\mathbf{\hat{H}}_m\mathbf{TWU}^H\mathbf{L}\mathbf{A}_m^H\\
	&-\mathbf{A}_m\mathbf{L}^T\mathbf{U}\mathbf{W}^H\mathbf{T}^H\mathbf{\hat{H}}_m^H\mathbf{F}_m^H\mathbf{P}_m^H\bigg)+\text{tr}(\mathbf{U}\mathbf{U}^H),
\end{aligned}
\end{equation}
where $\mathbf{\hat{H}}_m\triangleq
\frac{\mathbf{\bar{H}}_m}{\sigma_m}$ denotes the scaled channel matrix
and $\hat{\sigma}_{e,m}\triangleq \frac{\sigma_{e,m}}{\sigma_m}$
denotes the scaled channel estimation error variance.
\end{theorem}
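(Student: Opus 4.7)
The plan is to prove the theorem by directly verifying that the back-scaled point $\bar{\mathcal{S}}^\star$ satisfies every KKT condition of problem \eqref{original_problem}. The guiding intuition is that problem \eqref{trans_problem} differs from \eqref{original_problem} in exactly two respects: the channel and estimation error variance are rescaled as $\hat{\mathbf{H}}_m=\bar{\mathbf{H}}_m/\sigma_m$ and $\hat{\sigma}_{e,m}=\sigma_{e,m}/\sigma_m$, and the transmit power constraint is absorbed into the objective via the surrogate term $\tfrac{1}{P_t}\text{tr}(\mathbf{TWW}^H\mathbf{T}^H)\mathbf{P}_m\mathbf{F}_m\mathbf{F}_m^H\mathbf{P}_m^H$. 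The scaling $\bar{\mathbf{W}}^\star=a\mathbf{W}^\star$, $\bar{\mathbf{P}}_m^\star=(\sigma_m/a)\mathbf{P}_m^\star$ is tailored exactly to undo these two modifications while driving the power budget to equality.

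First, the plan is to verify primal feasibility. The modulus constraints \eqref{abs_constraint1}, \eqref{abs_constraint2} are inherited because $\mathbf{F}_m^\star$ and $\mathbf{T}^\star$ are unchanged, and $\|\mathbf{T}^\star\bar{\mathbf{W}}^\star\|^2 = a^2\|\mathbf{T}^\star\mathbf{W}^\star\|^2 = P_t$ by the very definition of $a$, so the power constraint \eqref{power_constraint10} holds with equality. The power-constraint Lagrange multiplier is then identified as a non-negative scalar of the form $\bar{\lambda}\propto\tfrac{1}{P_t}\sum_{m}\text{tr}(\mathbf{P}_m^\star\mathbf{F}_m^\star(\mathbf{F}_m^\star)^H(\mathbf{P}_m^\star)^H)$, chosen so that $\bar{\lambda}\nabla\|\mathbf{T}^\star\bar{\mathbf{W}}^\star\|^2$ coincides, after the inverse rescaling, with the gradient produced by the surrogate $\tfrac{1}{P_t}$-term in $\text{MSE}_\sigma$. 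Non-negativity of $\bar{\lambda}$ gives dual feasibility and complementary slackness is automatic because the constraint is active.

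The core of the proof is the stationarity check for each variable block. For each of $\bar{\mathbf{P}}_m^\star$, $\mathbf{F}_m^\star$, $\mathbf{T}^\star$, $\bar{\mathbf{W}}^\star$ and $\mathbf{U}^\star$, I would compute the Wirtinger derivative of the Lagrangian of \eqref{original_problem} at $\bar{\mathcal{S}}^\star$, then substitute the relations $\bar{\mathbf{H}}_m=\sigma_m\hat{\mathbf{H}}_m$, $\sigma_{e,m}=\sigma_m\hat{\sigma}_{e,m}$ together with the definitions of $\bar{\mathbf{W}}^\star$ and $\bar{\mathbf{P}}_m^\star$, and use the identity $\tfrac{1}{a^2}=\text{tr}(\mathbf{T}^\star\mathbf{W}^\star(\mathbf{W}^\star)^H(\mathbf{T}^\star)^H)/P_t$ to rewrite the coefficient on the residual noise term. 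The goal is to reduce each resulting expression, up to a strictly positive scalar, to the corresponding stationarity equation of problem \eqref{trans_problem} evaluated at $\mathcal{S}^\star$, which vanishes by assumption. The dual variables of the modulus constraints in \eqref{original_problem} are constructed by applying the same positive rescaling to the modulus multipliers coming from \eqref{trans_problem}.

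The hard part will be the stationarity equations for $\mathbf{T}^\star$ and $\bar{\mathbf{W}}^\star$, since these two variables couple into the quadratic channel term, the noise-trace factor, the cross terms with $\mathbf{U}^\star$, both modulus constraints, and—uniquely in the original problem—the power constraint itself. One must show that the scaling factors coming from the substitutions, together with the Lagrange-multiplier contribution $\bar{\lambda}\nabla\|\mathbf{TW}\|^2$, reproduce exactly the gradient of the surrogate $\tfrac{1}{P_t}$-term in $\text{MSE}_\sigma$. Once this algebraic matching is in place, every block stationarity equation of \eqref{original_problem} vanishes at $\bar{\mathcal{S}}^\star$, which, combined with the primal and dual feasibility and complementary slackness established above, completes the KKT verification and hence proves the theorem.
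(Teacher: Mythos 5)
Your proposal follows essentially the same route as the paper's own proof: it verifies the KKT conditions of problem \eqref{original_problem} at the scaled point by comparing the first-order conditions of \eqref{original_problem} and \eqref{trans_problem} under the substitutions $\mathbf{\bar{H}}_m=\sigma_m\mathbf{\hat{H}}_m$, $\sigma_{e,m}=\sigma_m\hat{\sigma}_{e,m}$ and $1/a^2=\|\mathbf{T}^{\star}\mathbf{W}^{\star}\|^2/P_t$, identifies the power-constraint multiplier generated by the surrogate $\tfrac{1}{P_t}$-term (the paper's $\lambda_{P,T}=\sum_{m}\tfrac{1}{P_t}\|\mathbf{\bar{P}}_m^{\star}\mathbf{F}_m^{\star}\|^2$), and notes that the power constraint is active so complementary slackness and dual feasibility hold. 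This matches the argument in Appendix~\ref{proof_sigma_trans}, so the plan is correct in approach and only leaves the routine block-by-block Wirtinger-gradient algebra to be carried out.
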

\begin{proof}
	See \textbf{Appendix} \ref{proof_sigma_trans}.
\end{proof}

\subsection{Proposed BCD-based joint iterative design}
In this subsection, we propose an efficient BCD-based iterative
algorithm to solve problem \eqref{trans_problem}. The variables are
partitioned into several convenient blocks which are updated
sequentially at each iteration. The subproblems in each block can be
solved in closed form.  At this point, we partition the search
variables into five blocks as follows: 1) Update $\{\mathbf{P}_m\}$ in
parallel, $\forall m\in\mathcal{M}$, by fixing the variables within
the other blocks; 2) Update $\mathbf{U}$ by fixing the other
variables; 3) Update $\{[\mathbf{F}_m]_{i,j}\},\, \forall i,j$,
sequentially by fixing other variables, $\forall m\in\mathcal{M}$. The
unit modulus constraints \eqref{abs_constraint2} are handled within
this block; 4) Update $[\mathbf{T}]_{i,j},\, \forall i,j,$
sequentially by fixing the other variables. The unit modulus
constraints \eqref{abs_constraint1} are handled within this block; 5)
Update $\mathbf{W}$ by fixing the other variables.  The detailed
updating procedure is presented as follows.

In \textbf{Step 1}, the subproblem for $\forall m$ can be expressed as
\begin{equation}
	\min_{\mathbf{P}_m}\quad \hat{f}_m,
\end{equation}
where
\begin{equation}
\begin{aligned}
	\hat{f}_m\triangleq \,&\text{tr}\bigg(\mathbf{P}_m\mathbf{F}_m\mathbf{\hat{H}}_m\mathbf{TWW}^H\mathbf{T}^H\mathbf{\hat{H}}_m^H\mathbf{F}_m^H\mathbf{P}_m^H
	+\hat{\sigma}_{e,m}^2\text{tr}(\mathbf{TWW}^H\mathbf{T}^H)\mathbf{P}_m\mathbf{F}_m\mathbf{F}_m^H\mathbf{P}_m^H\\
	&+\frac{1}{P_t}\text{tr}(\mathbf{TWW}^H\mathbf{T}^H)\mathbf{P}_m\mathbf{F}_m\mathbf{F}_m^H\mathbf{P}_m^H
	-\mathbf{P}_m\mathbf{F}_m\mathbf{\hat{H}}_m\mathbf{TWU}^H\mathbf{L}\mathbf{A}_m^H\\
	&-\mathbf{A}_m\mathbf{L}^T\mathbf{U}\mathbf{W}^H\mathbf{T}^H\mathbf{\hat{H}}_m^H\mathbf{F}_m^H\mathbf{P}_m^H\bigg).
\end{aligned}
\end{equation}
This is an unconstrained convex optimization problem with respect to $\mathbf{P}_m$. By checking the first order optimality condition, we obtain the solution of $\mathbf{P}_m$ as
\begin{equation}
\begin{aligned}
	\mathbf{P}_m^{\star}=\,&\mathbf{A}_m\mathbf{L}^T\mathbf{U}(\mathbf{T}\mathbf{W})^H\mathbf{\hat{H}}_m^H\mathbf{F}_m^H
	\bigg(\mathbf{F}_m\mathbf{\hat{H}}_m\mathbf{T}\mathbf{W}(\mathbf{T}\mathbf{W})^H\mathbf{\hat{H}}_m^H\mathbf{F}_m^H +\hat{\sigma}_{e,m}^2\|\mathbf{T}\mathbf{W}\|^2\mathbf{F}_m\mathbf{F}_m^H\\
	& + \frac{\|\mathbf{TW}\|^2}{P_t}\mathbf{F}_m\mathbf{F}_m^H \bigg)^{-1}. \label{P_solution}
\end{aligned}
\end{equation}

In \textbf{Step 2}, we seek the optimization of matrix $\mathbf{U}$
with the other variables fixed, which minimizes the MSE. This
subproblem is given by
\begin{equation}
\begin{aligned}
	\min_{\mathbf{U}}&\quad \hat{g},
\end{aligned}\label{U_problem}
\end{equation}
where $\hat{g}\triangleq \sum_{m=1}^M
-2\text{Re}\big(\text{tr}(\mathbf{P}_m\mathbf{F}_m\mathbf{\hat{H}}_m\mathbf{TWU}^H\mathbf{L}\mathbf{A}_m^H)\big)+\text{tr}(\mathbf{U}\mathbf{U}^H)$
and $\mathbf{U}$ denotes a lower triangle matrix with ones on the main
diagonal. We can solve this subproblem by taking the partial
derivation of $\hat{g}$ with respect to the elements in the strictly
lower triangle area of the matrix $\mathbf{U}$.

Let us define an operator $\text{vecLT}(\mathbf{X})$, which extracts
the elements in the strictly lower triangle area of the square matrix
$\mathbf{X}\in \mathbb{C}^{n\times n}$ and vectorizes these elements
in the form of column, i.e., $\text{vecLT}(\mathbf{X})=
[[\mathbf{X}]_{2,1}, \ldots, [\mathbf{X}]_{n,1}, [\mathbf{X}]_{3,2},
  \ldots [\mathbf{X}]_{n,2}, [\mathbf{X}]_{4,3} \ldots,
         [\mathbf{X}]_{n,n-1}]^T$.  It is readily seen that
$\frac{\partial\hat{g}}{\partial\text{vecLT}(\mathbf{U})^*}=\text{vecLT}(\frac{\partial\hat{g}}{\partial\mathbf{U}^*})$. Hence,
by checking the first order optimality condition, the optimal strictly
lower triangle part of $\mathbf{U}$ can be given by
\begin{equation}
	\text{vecLT}(\mathbf{U}^{\star})=\text{vecLT}(\sum_{m=1}^M\mathbf{L}\mathbf{A}_m^H\mathbf{P}_m\mathbf{F}_m\mathbf{\hat{H}}_m\mathbf{T}\mathbf{W}),\label{U_solution}
\end{equation}
where $\mathbf{U}^{\star}$ denotes the optimal lower triangle matrix.

In \textbf{Step 3}, we optimize $\mathbf{F}_m, \forall
m\in\mathcal{M}$, in parallel, with the other variables
fixed. Specifically, the elements of $\mathbf{F}_m$, i.e.,
$[\mathbf{F}_m]_{i,j},\, \forall i,j$, are optimized sequentially. The
corresponding subproblem is given by
\begin{subequations}
\begin{align}
	\min_{[\mathbf{F}_m]_{i,j}}&\quad \hat{f}_m\\
	\text{s.t.}&\quad |[\mathbf{F}_m]_{i,j}|=1.\label{F_element_abs}
\end{align}\label{Problem_for_block3}
\end{subequations}

By appropriate rearrangement, we can rewrite problem \eqref{Problem_for_block3} as
\begin{equation}
\begin{aligned}
	\min_{[\mathbf{F}_m]_{i,j}}&\quad\text{tr}(\mathbf{F}_m^H\mathbf{A}_{F_m}\mathbf{F}_m\mathbf{C}_{F_m}-2\text{Re}(\mathbf{F}_m^H\mathbf{B}_{F_m}))\\
	\text{s.t.}&\quad \eqref{F_element_abs},
\end{aligned}\label{one_interation_prob1}
\end{equation}
where $\mathbf{A}_{F_m}\triangleq \mathbf{P}_m^H\mathbf{P}_m$,
$\mathbf{B}_{F_m}\triangleq
\mathbf{P}_m^H\mathbf{A}_m\mathbf{L}^T\mathbf{U}\mathbf{W}^H\mathbf{T}^H\mathbf{\hat{H}}_m^H$
and $\mathbf{C}_{F_m}\triangleq
\mathbf{\hat{H}}_m\mathbf{TWW}^H\mathbf{T}^H\mathbf{\hat{H}}_m^H+\hat{\sigma}_{e,m}^2\text{tr}(\mathbf{TWW}^H\mathbf{T}^H)\mathbf{I}+\frac{1}{P_t}\text{tr}(\mathbf{TWW}^H\mathbf{T}^H)\mathbf{I}$.
It is readily seen that the objective function of problem
\eqref{one_interation_prob1} is a quadratic function with respect to
$[\mathbf{F}_m]_{i,j}$. Thus, by omitting some constant terms in the
objective function, problem \eqref{one_interation_prob1} can be
further rewritten as
\begin{equation}
\begin{aligned}
	\min_{[\mathbf{F}_m]_{i,j}}&\quad \bar{a}_{F,m,i,j}|[\mathbf{F}_m]_{i,j}|^2-\text{Re}(\bar{b}_{F,m,i,j}^*[\mathbf{F}_m]_{i,j})\\
	\text{s.t.}&\quad \eqref{F_element_abs},
\end{aligned}\label{one_interation_prob222}
\end{equation}
where $\bar{a}_{F,m,i,j}$ and $\bar{b}_{F,m,i,j}$ denote some
coefficients.  The optimal solution of problem
\eqref{one_interation_prob222} is given by
$[\mathbf{F}_m]_{i,j}^{\star}=\frac{\bar{b}_{F,m,i,j}}{|\bar{b}_{F,m,i,j}|}$. Therefore,
we only need to know the value of $\bar{b}_{F,m,i,j}$ to update
$[\mathbf{F}_m]_{i,j}$. The value of $\bar{b}_{F,m,i,j}$ is given by
\begin{equation}
	\bar{b}_{F,m,i,j} = [\mathbf{A}_{F,m}]_{i,i}[\mathbf{F}_m]_{i,j}[\mathbf{C}_{F,m}]_{j,j} - [\mathbf{A}_{F,m}\mathbf{F}_m\mathbf{C}_{F,m}]_{i,j} + [\mathbf{B}_{F,m}]_{i,j}.
\end{equation}
Besides, in order to reduce the computational complexity of updating $[\mathbf{F}_m]_{i,j}$, we can update $[\mathbf{F}_m]_{i,j}$ sequentially by following similar steps in \textbf{Algorithm 3} in \cite{shi2018spectral}.

In \textbf{Step 4}, we optimize the elements in $\mathbf{T}$, i.e.,
$[\mathbf{T}]_{i,j},\, \forall i,j$, by fixing the other
variables. The corresponding subproblem is provided as
\begin{equation}
\begin{aligned}
	\min_{[\mathbf{T}]_{i,j}}&\quad \text{MSE}_{\sigma}\\
	s.t. &\quad |[\mathbf{T}]_{i,j}|=1.
\end{aligned}\label{Problem_for_block4}
\end{equation}
This subproblem can be solved by following the same method introduced in \textbf{Step 3}.

In \textbf{Step 5}, we optimize the variable $\mathbf{W}$ with the
other variables fixed. We need to solve the following convex
subproblem
\begin{equation}
\begin{aligned}
	\min_{\mathbf{W}}&\quad \text{MSE}_{\sigma}.
\end{aligned}\label{Problem_for_block5}
\end{equation}

By checking the first order optimality condition, we obtain the
optimal solution of this subproblem as
\begin{equation}
\begin{aligned}
	\mathbf{W}^{\star}=&\bigg(\sum_{m=1}^M\mathbf{T}^H\mathbf{\hat{H}}_m^H\mathbf{F}_m^H\mathbf{P}_m^H\mathbf{P}_m\mathbf{F}_m\mathbf{\hat{H}}_m\mathbf{T}+\text{tr}(\Sigma_{m=1}^M\frac{1}{Pt}\mathbf{P}_m\mathbf{F}_m\mathbf{F}_m^H\mathbf{P}_m^H)\mathbf{T}^H\mathbf{T}\\
	&+\sum_{m=1}^M\hat{\sigma}_{e,m}^2\text{tr}(\mathbf{P}_m\mathbf{F}_m\mathbf{F}_m^H\mathbf{P}_m^H)\mathbf{T}^H\mathbf{T}\bigg)^{-1}
	(\sum_{m=1}^M\mathbf{T}^H\mathbf{\hat{H}}_m^H\mathbf{F}_m^H\mathbf{P}_m^H\mathbf{A}_m\mathbf{L}^T\mathbf{U}).
\end{aligned}\label{W_solution}
\end{equation}

In each iteration of the proposed BCD-based algorithm, we implement
the above five steps to update the optimization variables. The overall
procedure of the proposed algorithm is summarized in \textbf{Algorithm
  \ref{Proposed_algorithm}}.\footnote{We can obtain the KKT point of
  the original problem \eqref{original_problem} via scaling:
  $(\{\mathbf{\bar{P}}_m^{\star},\mathbf{F}_m^{\star}\},\mathbf{T}^{\star},\mathbf{\bar{W}}^{\star},\mathbf{U}^{\star})=(\{\frac{1}{a}
  \sigma_m
  \mathbf{P}_m^{\star},\mathbf{F}_m^{\star}\},\mathbf{T}^{\star},a\mathbf{W}^{\star},\mathbf{U}^{\star})$.}

\begin{algorithm}
\caption{Proposed BCD-based algorithm for nonlinear hybrid transceiver design}
\begin{itemize}
\item[1.] Define the accuracy tolerance $\delta$. Initialize all the variables in $\mathcal{S}$ with a feasible point. Set the iteration number $i=0$.
	\item[2.] \textbf{Repeat}
	\begin{itemize}
		\item[2.1] Update $ \mathbf{P}_m, \forall m\in\mathcal{M}$, in parallel based on \eqref{P_solution}.
		\item[2.2] Update $\mathbf{U}$ based on \eqref{U_solution}.
		\item[2.3] Update $\mathbf{F}_m, \forall m\in\mathcal{M}$, in parallel. In particular, the elements of $\mathbf{F}_m$ are optimized sequentially based on the method introduced in \textbf{Step 3}.
		\item[2.4] Update the elements of $\mathbf{T}$ sequentially based on the method introduced in \textbf{Step 3}.
		\item[2.5] Update $\mathbf{W}$ based on \eqref{W_solution}.
		\item[2.6] Update the iteration number: $i=i+1$.
	\end{itemize}
	\item[3.] \textbf{Until} the difference between two successive objective value is less than $\delta$.	
\end{itemize}

\label{Proposed_algorithm}
\end{algorithm}

\subsection{Convergence and complexity of \textbf{Algorithm \ref{Proposed_algorithm}}}
In this subsection, we analyze the convergence and the computational
complexity of the proposed algorithm for the nonlinear hybrid
transceiver joint design.

It is readily seen that each subproblem of the proposed BCD-based
algorithm (\textbf{Algorithm \ref{Proposed_algorithm}}) is uniquely
and globally solved. Hence the proposed algorithm converges to a KKT
point of problem \eqref{trans_problem} \cite{wright2015coordinate}.

The complexity of the proposed BCD-based algorithm is dominated by the
inversion operations in \textbf{Step 1} and \textbf{Step 5} and the
multiplications in \textbf{Step 3} and \textbf{Step 4}, the
complexities of which are $\mathcal{O}(R_{d,m}^3)$,
$\mathcal{O}(R_s^3)$, $\mathcal{O}(N_{d,m}^2R_{d,m}^2)$ and
$\mathcal{O}(N_s^2R_s^2)$, respectively.  Therefore, by omitting the
lower order terms, the complexity of our proposed BCD-based algorithm
is given by $\mathcal{O}(I(N_s^2R_s^2))$, where $I$ denotes the
maximum iteration number of the proposed BCD-based algorithm.

\section{Proposed two-timescale Hybrid Transceiver joint design algorithm}
\label{TTS_design}
In order to reduce the CSI signalling overhead and the effects of
outdated CSI caused by the associated delays, in this section we
propose a novel two-timescale nonlinear hybrid transceiver design
algorithm.  In this scheme, the long-timescale analog BF matrices are
optimized based on the channel statistics and the short-timescale
digital processing matrices are designed based on the instantaneous
low-dimensional effective CSI matrices.
Based on the TOSCA framework \cite{liu2018online}, we can see that
problem \eqref{TTS_problem} can be decomposed into a long-timescale
master problem and a short-timescale subproblem.

\subsection{Short-timescale subproblem}
By fixing the long-timescale variables $\{\mathbf{F}_m\}$ and $\mathbf{T}$, the short-timescale subproblem is given by
\begin{equation}
		\begin{aligned}
		\min_{\mathbf{W},\mathbf{U},\{\mathbf{P}_m\}}&\quad \text{MSE}(\{\mathbf{P}_m\},\mathbf{W},\mathbf{U})\\
		\text{s.t.} &\quad \eqref{power_constraint10}.
		\end{aligned}\label{short_term_problem}
\end{equation}
Note that Theorem \ref{sigma_trans} can be also applied to this subproblem similarly. This subproblem can be transformed into the following problem,
\begin{equation}
		\begin{aligned}
		\min_{\mathbf{W},\mathbf{U},\{\mathbf{P}_m\}}\quad \text{MSE}_{\sigma}(\{\mathbf{P}_m\},\mathbf{W},\mathbf{U}).
		\end{aligned}\label{short_term_trans_problem}
\end{equation}
We can solve this converted short-timescale problem based on a BCD
algorithm which is similar to \textbf{Algorithm
  \ref{Proposed_algorithm}} (without Step 2.3 and Step 2.4). Then, we
obtain the solution of the short-timescale subproblem
\eqref{short_term_problem} via scaling:
$(\{\mathbf{\bar{P}}_m^{\star}\},\mathbf{\bar{W}}^{\star},\mathbf{U}^{\star})=(\{\frac{1}{a}
\sigma_m
\mathbf{P}_m^{\star}\},a\mathbf{W}^{\star},\mathbf{U}^{\star})$.

\textbf{Remark:} Note that the design of short-timescale digital processing matrices only requires the effective CSI matrices $\mathbf{\tilde{H}}_m$, which can be obtained by pre-multiplying and post-multiplying $\mathbf{\hat{H}}^i_m$ with $\mathbf{F}^t_m$ and $\mathbf{T}^t$, respectively, i.e., $\mathbf{F}_m^t\mathbf{\hat{H}}_m^i\mathbf{T}^t=\mathbf{\tilde{H}}_m^i $. 
The effective CSI matrices $\{\mathbf{\tilde{H}}_m\}$ have much lower dimension than the instantaneous estimated CSI matrices $\{\mathbf{\hat{H}}_m\}$, thus the overhead of sending CSI can be significantly reduced.

\subsection{Long-timescale master problem}
By fixing the short-timescale variables, the long-timescale master problem is given by
\begin{equation}
		\begin{aligned}
		\min_{\boldsymbol{\theta}_T,\{\boldsymbol{\theta}_{F_m}\}}&\quad \tilde{f}(\boldsymbol{\theta}_T,\{\boldsymbol{\theta}_{F_m}\}, \boldsymbol{\Theta}^{\star})=\mathbb{E}_{\mathbf{\bar{H}}_m}\{g(\boldsymbol{\theta}_T,\{\boldsymbol{\theta}_{F_m}\}, \boldsymbol{\Theta}^{\star})\}
		\end{aligned}\label{long_term_problem}
	\end{equation}
where $ \boldsymbol{\theta}_{F_m}\triangleq \angle\mathbf{F}_m , \forall m$, $\boldsymbol{\theta}_T\triangleq \angle \mathbf{T}$, $\boldsymbol{\Theta}^{\star}\triangleq \{\{\mathbf{\bar{P}}_m\}^{\star},\mathbf{\bar{W}}^{\star},\mathbf{U}^{\star}\}$ denotes the solution of problem \eqref{short_term_problem} and
\begin{equation}
	g(\boldsymbol{\theta}_T,\{\boldsymbol{\theta}_{F_m}\}, \boldsymbol{\Theta}^{\star})\triangleq \text{MSE}(\{\mathbf{P}_m,\mathbf{F}_m\},\mathbf{T},\mathbf{W},\mathbf{U}).
\end{equation}

To solve problem \eqref{long_term_problem}, the long-timescale
variables are updated at the end of each frame by solving an
approximation problem obtained via replacing the objective function of
problem \eqref{long_term_problem} with a quadratic surrogate function.
Thus, we introduce the following quadratic surrogate function to
approximate the objective function for frame $t$:
\begin{equation}
\begin{aligned}
	\bar{f}^t(\boldsymbol{\theta}_T,\{\boldsymbol{\theta}_{F_m}\},\boldsymbol{\Theta}^{\star,t})=&f^{t}+(\mathbf{f}_{T}^{t})^T(\boldsymbol{\theta}_T-\boldsymbol{\theta}_T^{t})+\sum_{m=1}^M(\mathbf{f}_{F_m}^{t})^T(\boldsymbol{\theta}_{F_m}-\boldsymbol{\theta}_{F_m}^{t})\\
	&+\tau\|\boldsymbol{\theta}_T-\boldsymbol{\theta}_T^{t}\|^2+\sum_{m=1}^M\tau\|\boldsymbol{\theta}_{F_m}-\boldsymbol{\theta}_{F_m}^{t}\|^2,
\end{aligned}\label{surrogate_func}
\end{equation}
where $\boldsymbol{\Theta}^{\star,t}$ denotes the solution of solving
problem \eqref{short_term_problem} with given
$\{\mathbf{\bar{H}}_m^t\}$, $\boldsymbol{\theta}_T^{t}$ and
$\boldsymbol{\theta}_{F_m}^{t}$. $\tau>0$ is a constant. $f^t$,
$\mathbf{f}_{T}^t$ and $\mathbf{f}_{F_m}^t$ denote the approximations
of objective function $\tilde{f}$, the partial derivatives
$\frac{\partial \tilde{f}}{\partial \boldsymbol{\theta}_T}$ and
$\frac{\partial \tilde{f}}{\partial \boldsymbol{\theta}_{F_m}}$,
respectively, based on the current channel sample
$\{\mathbf{\bar{H}}_m^t\}$ and $\boldsymbol{\Theta}^{\star, t}$. The
quantities can be updated based on the following expressions:
\begin{equation}
	f^t=(1-\rho^t)f^{t-1}+\rho^t g(\boldsymbol{\theta}_T^{t},\{\boldsymbol{\theta}_{F_m}^{t}\},\boldsymbol{\Theta}^{\star, t}),
\end{equation}
\begin{equation}
	\mathbf{f}_{T}^{t}=(1-\rho^t)\mathbf{f}_{T}^{t-1}+\rho^t\frac{\partial g}{\partial \boldsymbol{\theta}_T}|_{(\boldsymbol{\theta}_T^{t},\{\boldsymbol{\theta}_{F_m}^{t}\},\boldsymbol{\Theta}^{\star, t})},
\end{equation}
and
\begin{equation}
	\mathbf{f}_{F_m}^{t}=(1-\rho^t)\mathbf{f}_{F_m}^{t-1}+\rho^t\frac{\partial g}{\partial \boldsymbol{\theta}_{F_m}}|_{(\boldsymbol{\theta}_T^{t},\{\boldsymbol{\theta}_{F_m}^{t}\},\boldsymbol{\Theta}^{\star, t})}.
\end{equation}
 The details of the derivatives are given in \textbf{Appendix
   \ref{derivation_of_gradients}}. Here $\{\rho^t\}$ is a sequence of
 parameters satisfying condition \eqref{parameter_condition}.

Subsequently, let us solve the approximated problem for \eqref{long_term_problem}, which is given by 
\begin{equation}
	\min_{\boldsymbol{\theta}_T, \{\boldsymbol{\theta}_{F_m}\}} \bar{f}^t(\boldsymbol{\theta}_T, \{\boldsymbol{\theta}_{F_m}\}, \boldsymbol{\Theta}^{\star,t}).\label{approximate_prob}
\end{equation}
It is readily seen that \eqref{approximate_prob} can be solved as follows
\begin{equation}
\begin{aligned}
	&\boldsymbol{\bar{\theta}}_T = \boldsymbol{\theta}^{t}-\frac{\mathbf{f}_T^t}{2\tau}, \quad\boldsymbol{\bar{\theta}}_{F_m} = \boldsymbol{\theta}_{F_m}^{t} - \frac{\mathbf{f}_{F_m}^t}{2\tau}, \forall m,
\end{aligned}
\end{equation}
where $\boldsymbol{\bar{\theta}}_T$ and $\{\boldsymbol{\bar{\theta}}_{F_m}\}$ are the optimal solution of the quadratic approximation problem \eqref{approximate_prob}.

Then, the long-timescale variables are updated as
\begin{equation}\label{update_rule}
	\boldsymbol{\theta}_{T}^{t+1} = (1-\gamma^t)\boldsymbol{\theta}_T^{t} + \gamma^{t}\boldsymbol{\bar{\theta}}_T, \quad \boldsymbol{\theta}_{F_m}^{t+1} = (1-\gamma^t)\boldsymbol{\theta}_{F_m}^{t} + \gamma^{t}\boldsymbol{\bar{\theta}}_{F_m}, \forall m,
\end{equation}
where $\{\gamma^t\}$ denotes a sequence of parameters satisfying condition \eqref{parameter_condition}. The proposed two-timescale joint design algorithm is summarized in \textbf{Algorithm \ref{TTS_algorithm}}.

\begin{algorithm}
\caption{Proposed TOSCA-based two-timescale joint design algorithm}
\begin{itemize}
\item[1.]A super-frame starts. Initialize the long-timescale variables $\{\boldsymbol{\theta}_T^0,\{\boldsymbol{\theta}_{F_m}^0\}\}$ and the short-timescale variables $\{\mathbf{W}^0,\mathbf{U}^0, \{\mathbf{P}_m^0\}\}$ to a feasible point. Set the frame index $t=0$ and the time slot index $i=0$.
	\item[2.] \textbf{Repeat}
	\begin{itemize}
		\item[2.1] Obtain the effective CSI matrices $\{\mathbf{\tilde{H}}_m^i\}$ for time slot $i$.
		\item[2.2] Solve problem \eqref{short_term_problem} and obtain the solution $\{\mathbf{\bar{W}}^i,\mathbf{U}^i, \{\mathbf{\bar{P}}_m^i\}\}$.
		\item[2.3] Update the time slot index: $i=i+1$.
	\end{itemize}
	\item[] \textbf{Until} the frame ends, i.e. $i=(t+1)T_s$.	
	\item[3.] Obtain a CSI sample $\{\mathbf{\bar{H}}_m^t\}$ at the end of frame $t$.
	\item[4.] Update the surrogate function \eqref{surrogate_func} using $\boldsymbol{\Theta}^{\star,t}$, $\{\boldsymbol{\theta}_T^{t}, \{\boldsymbol{\theta}_{F_m}^{t}\}\}$ and $\{\mathbf{\bar{H}}_m^t\}$.
	\item[5.] Solve \eqref{approximate_prob} to obtain $\{\boldsymbol{\bar{\theta}}_T, \boldsymbol{\bar{\theta}}_{F_m}\}$.
	\item[6.] Update $\{\boldsymbol{\theta}_T^{t+1}, \{\boldsymbol{\theta}_{F_m}^{t+1}\}\}$ according to \eqref{update_rule}.
	\item[7.] Set $t=t+1$ and return to Step 2.
\end{itemize}

\label{TTS_algorithm}
\end{algorithm}

According to \cite{liu2018online}, if we choose the sequences of the parameters $\{\rho^t, \gamma^t\}$ so that they satisfy the following condition
\begin{equation}
\begin{aligned}\label{parameter_condition}
&\rho^t\to 0, \frac{1}{\rho^t}\leq O(t^{\beta})\text{ for some }\beta\in(0,1), \sum_t(\rho^t)^2<\infty,\\
&\gamma^t\to 0, \sum_t \gamma^t=\infty, \sum_t (\gamma^t)^2<\infty, \lim_{t\to \infty}\frac{\gamma^t}{\rho^t}=0.
\end{aligned}
\end{equation}
then our proposed two-timescale algorithm can be guaranteed to converge to a KKT solution of problem \eqref{TTS_problem}.
The overall computational complexity of \textbf{Algorithm
  \ref{TTS_algorithm}} is dominated by the updating of the
short-timescale variables, which is given by
$\mathcal{O}(T_sI(R_s^3))$, where $I$ denotes the maximum iteration
number of the proposed BCD-based algorithm for the short-timescale
subproblem.

\section{Design of cancellation ordering matrix}
\label{cancellation_order}

The cancellation order of the proposed THP-based nonlinear hybrid
transceiver design affects the system performance. In order to further
increase the performance, in this section we seek to design the
near-optimal cancellation ordering matrix $\mathbf{L}$.\footnote{In
  this work, we mainly focus on the cancellation order among the users
  rather than that among the data streams, since the channel conditions
  related to different antennas per user are quite similar. }

Let us rewrite the expression of the original $MSE_{\sigma}$ as follows
\begin{equation}
\begin{aligned}
	MSE_{\sigma}(\mathcal{P},\mathbf{U};\mathbf{L}) = g(\mathcal{P};\mathbf{L}) + f(\mathcal{P},\mathbf{C};\mathbf{L})
\end{aligned}\label{MSE_S}
\end{equation}
where $\mathcal{P}\triangleq \{ \{\mathbf{P}_m,\mathbf{F}_m\}, \mathbf{T}, \mathbf{\tilde{W}} \}$ denotes a set of variables,
\begin{equation}
	f(\mathcal{P},\mathbf{C};\mathbf{L})\triangleq \sum_{m=1}^M\text{tr}\bigg(-\mathbf{P}_m\mathbf{F}_m\hat{\mathbf{H}}_m\mathbf{T\tilde{W}}\mathbf{L}^T\mathbf{C}^H\mathbf{L}\mathbf{A}_m^H-\mathbf{A}_m\mathbf{L}^T\mathbf{C}\mathbf{L}\mathbf{\tilde{W}}^H\mathbf{T}^H\mathbf{\hat{H}}_m^H\mathbf{F}_m^H\mathbf{P}_m^H\bigg)+\text{tr}(\mathbf{C}\mathbf{C}^H),\label{MSE_f}
\end{equation}
and
\begin{equation}
\begin{aligned}
	g(\mathcal{P};\mathbf{L})\triangleq& \|\mathbf{L}\mathbf{\bar{A}}\mathbf{T}\mathbf{\tilde{W}}\mathbf{L}^T\|^2+\tilde{g}(\mathcal{P})=\|\mathbf{\bar{A}}\mathbf{T}\mathbf{\tilde{W}}\|^2+\tilde{g}(\mathcal{P}),
\end{aligned}\label{MSE_g}
\end{equation}
where $\mathbf{\bar{A}}\triangleq [ (\mathbf{P}_1\mathbf{F}_1\mathbf{\hat{H}}_1)^T,\,
(\mathbf{P}_2\mathbf{F}_2\mathbf{\hat{H}}_2)^T,\ldots
,(\mathbf{P}_M\mathbf{F}_M\mathbf{\hat{H}}_M)^T
]^T$, $\mathbf{\tilde{W}}=\mathbf{W}\mathbf{L}$ and
\begin{equation}
\begin{aligned}
	\tilde{g}(\mathcal{P})\triangleq& \sum_{m=1}^M \text{tr}\bigg(\hat{\sigma}_{e,m}^2\text{tr}(\mathbf{T\tilde{W}\tilde{W}}^H\mathbf{T}^H)\mathbf{P}_m\mathbf{F}_m\mathbf{F}_m^H\mathbf{P}_m^H
	+\text{tr}(\mathbf{T\tilde{W}\tilde{W}}^H\mathbf{T}^H)\mathbf{P}_m\mathbf{F}_m\mathbf{F}_m^H\mathbf{P}_m^H\\
	&-\mathbf{P}_m\mathbf{F}_m\hat{\mathbf{H}}_m\mathbf{T\tilde{W}}\mathbf{A}_m^H-\mathbf{A}_m\mathbf{\tilde{W}}^H\mathbf{T}^H\mathbf{\hat{H}}_m^H\mathbf{F}_m^H\mathbf{P}_m^H\bigg) + D.
\end{aligned}
\end{equation}

By recalling the solution of $\mathbf{U}$ in \eqref{U_solution} and substituting $\mathbf{C}^{\star}=\mathbf{U}^{\star}-\mathbf{I}$ into $f(\mathcal{P},\mathbf{U};\mathbf{L})$, we obtain
\begin{equation}\label{gain_of_THP}
	f(\mathcal{P},\mathbf{C}^{\star};\mathbf{L})=-\|\Delta(\mathbf{L}\mathbf{\bar{A}}\mathbf{T}\mathbf{\tilde{W}}\mathbf{L}^T)\|^2,
\end{equation}
where the operation $\Delta(.)$ is defined as that
$\Delta(\mathbf{X})$ extracts the elements in the strictly lower
triangle area of the square matrix $\mathbf{X}\in \mathbb{C}^{n\times
  n}$ and forms a strictly lower triangle matrix, i.e.
\begin{equation}
\begin{aligned}
	\left[ \Delta(\mathbf{X}) \right]_{i,j}=\begin{cases}[\mathbf{X}]_{i,j}, \quad & \text{if } i>j,\\
	0, \quad &\text{otherwise}.
	\end{cases}
\end{aligned}
\end{equation}

Based on \eqref{gain_of_THP}, \eqref{MSE_g} and \eqref{MSE_S}, it is
readily seen that $g(\mathcal{P};\mathbf{L})$ is an expression of MSE
of a linear hybrid transceiver, and it is slightly affected by the
cancellation ordering matrix $\mathbf{L}$. The MSE gain of THP comes
from $f(\mathcal{P},\mathbf{C}^{\star};\mathbf{L})$.  Thus, we can see
that the MSE performance of the THP-based hybrid transceiver always
outperforms that of its linear counterpart no matter what the
cancellation ordering matrix $\mathbf{L}$ is.  Note that the matrix
$\Delta(\mathbf{L}\mathbf{\bar{A}}\mathbf{T}\mathbf{\tilde{W}}\mathbf{L}^T)$
is a strictly lower triangle matrix and the elements in its upper
triangle area are forced to zero. The permutation matrix $\mathbf{L}$
can change the positions of the elements and thus the value of
$MSE_{\sigma}(\mathcal{P},\mathbf{U}^{\star};\mathbf{L})$.

However, it is very difficult to design the optimal cancellation
ordering matrix $\mathbf{L}$ due to the MSE expression with unknown
optimization variables.  Thus, we seek to develop a low-complexity
approach to find the near-optimal cancellation order based on a
comparable lower bound of MSE, which does not contain the coupled
terms of $\mathbf{L}$ and other variables.  To this end, we derive a lower bound for the term $f(\mathcal{P},\mathbf{C}^{\star};\mathbf{L})$ and
design the matrix $\mathbf{L}$ based on the lower bound.


%

In the following, we derive the comparable lower bound of
$f(\mathcal{P},\mathbf{C}^{\star};\mathbf{L})$.  First, let us define
$\mathbf{\bar{B}}\triangleq \mathbf{T}\mathbf{W} =
[\mathbf{\bar{B}}_1, \mathbf{\bar{B}}_2, ...,
  \mathbf{\bar{B}}_M]\in\mathbb{C}^{N_s\times D}$, where
$\mathbf{\bar{B}}_m\in\mathbb{C}^{N_s\times D_m}$ denotes a submatrix
of $\mathbf{\bar{B}}$, which is formulated from the
$(\sum_{i=1}^{m-1}D_i+1)$th column vector to the
$(\sum_{i=1}^{m}D_i)$th column vector of matrix $\mathbf{\bar{B}}$,
then we have $f(\mathcal{P},\mathbf{C}^{\star};\mathbf{L}) =
-\|\Delta(\mathbf{L}\mathbf{\bar{A}}\mathbf{\bar{B}})\|^2 =
-\|\Delta(\mathbf{\hat{A}}\mathbf{\bar{B}})\|^2 =
-\|\Delta(\boldsymbol{\Omega})\|^2$, where $\mathbf{\hat{A}}
\triangleq \mathbf{L}\mathbf{\bar{A}}=[
  (\mathbf{P}_{(1)}^{\star}\mathbf{F}_{(1)}^{\star}\mathbf{\hat{H}}_{(1)})^T,\,
  (\mathbf{P}_{(2)}^{\star}\mathbf{F}_{(2)}^{\star}\mathbf{\hat{H}}_{(2)})^T$,
  $\ldots
  ,(\mathbf{P}_{(M)}^{\star}\mathbf{F}_{(M)}^{\star}\mathbf{\hat{H}}_{(M)})^T
]^T \in \mathbb{C}^{D\times N_s}$ denotes a matrix obtained by
permutating the rows of $\mathbf{\bar{A}}$ with the ordering matrix
$\mathbf{L}$.  $\boldsymbol{\Omega}=\mathbf{\hat{A}} \mathbf{\bar{B}}$
can be structured as
\begin{equation}
\boldsymbol{\Omega}\triangleq
\left[\begin{array}{cccc}
\boldsymbol{\Omega}_{(1),1} & \boldsymbol{\Omega}_{(1),2} & \ldots & \boldsymbol{\Omega}_{(1),M}\\
\boldsymbol{\Omega}_{(2),1} & \boldsymbol{\Omega}_{(2),2} & \ldots & \boldsymbol{\Omega}_{(2),M}\\
\vdots & \vdots & \ddots & \vdots \\
\boldsymbol{\Omega}_{(M),1} & \boldsymbol{\Omega}_{(M),2} & \ldots & \boldsymbol{\Omega}_{(M),M}\\
\end{array}\right],
\end{equation}
where $\boldsymbol{\Omega}_{(i),k}\triangleq \mathbf{P}_{(i)}^{\star}\mathbf{F}_{(i)}^{\star}\mathbf{\hat{H}}_{(i)}\mathbf{\bar{B}}_k\in \mathbb{C}^{D_{(i)}\times D_{k}}, \forall i,k$.

Then we can rewrite $f(\mathcal{P},\mathbf{C}^{\star};\mathbf{L})$ as follows
\begin{equation}
	f(\mathcal{P},\mathbf{C}^{\star};\mathbf{L})=-\sum_{i>k}\|\boldsymbol{\Omega}_{(i),k}\|^2 - \sum_{i=1}^M\|\Delta(\boldsymbol{\Omega}_{(i),i})\|^2.\label{f_cal}
\end{equation}

Let us define $C_1$ as the upper bound of $\|\mathbf{P}_m\mathbf{F}_m\|\| \mathbf{\bar{B}}_k\|$, i.e., $\|\mathbf{P}_m\mathbf{F}_m\|\| \mathbf{\bar{B}}_k\|\leq C_1, \forall m, k$. Then, we have the upper bound for $\|\boldsymbol{\Omega}_{(i),k}\|^2$ as
\begin{equation}
	\|\boldsymbol{\Omega}_{(i),k}\|^2\leq (\|\mathbf{P}_{(i)}^{\star}\mathbf{F}_{(i)}^{\star}\|\|\mathbf{\hat{H}}_{(i)}\|\|\mathbf{\bar{B}}_k\|)^2\leq C_1^2\|\mathbf{\hat{H}}_{(i)}\|^2.\label{each_ineq}
\end{equation}

Based on \eqref{f_cal} and \eqref{each_ineq}, we finally obtain the lower bound for $f(\mathcal{P},\mathbf{C}^{\star};\mathbf{L})$ as
\begin{equation}
\begin{aligned}
	f(\mathcal{P},\mathbf{C}^{\star};\mathbf{L})&\geq -\sum_{i>k}\|\boldsymbol{\Omega}_{(i),k}\|^2 - \sum_{i}\|\boldsymbol{\Omega}_{(i),i}\|^2\geq -C_1^2(\sum_{i=1}^M i\|\mathbf{\hat{H}}_{(i)}\|^2).
\end{aligned}
\end{equation}
Then, we can see that the cancellation order can be generated from the
smallest value to the largest value based on the sequence
$\|\mathbf{\hat{H}}_{1}\|^2, \|\mathbf{\hat{H}}_{2}\|^2, \ldots,
\|\mathbf{\hat{H}}_{M}\|^2$ aiming at minimizing the lower bound of
MSE, i.e.,
$\|\mathbf{\bar{A}}\mathbf{T}\mathbf{\tilde{W}}\|^2+\tilde{g}(\mathcal{P})-C_1^2(\sum_{i=1}^M
i\|\mathbf{\hat{H}}_{(i)}\|^2)$.  The cancellation ordering matrix
$\mathbf{L}$ can be straightforwardly formulated based on this order.

\section{Simulation results}
\label{simulation_results}
In this section, we evaluate the performance of the proposed THP-based
hybrid transceiver joint design algorithms.  We consider the widely
used narrow-band mmWave channel model with the uniform linear antenna
array configuration\cite{hemadeh2018millimeter}. The channel matrix
between user $m$ and the BS is given by
\begin{equation}
\mathbf{H}_{k}=\sum_{n_{cl}}^{N_{cl}}\sum_{p}^{N_p}\Gamma_{n_{cl}}\mathbf{a}(\theta_{n_{cl}}^t+\psi_{n_{cl},n_p}^t)\mathbf{a}(\theta_{n_{cl}}^r+\psi_{n_{cl},n_p}^r)\text{exp}(j2\pi f_d\tau \text{cos}(\theta_{n_{cl}}^r+\psi_{n_{cl},n_p}^r)),
\end{equation}
where $N_{cl}$ and $N_p$ are the number of aggregated clusters and the
number of rays within the cluster $p$, respectively, and
$\Gamma_{n_{cl}}\sim\mathcal{CN}(0,1)$ represents the complex channel
gain of the cluster $n_{cl}$. $\theta_{n_{cl}}^t$ and
$\psi_{n_{cl},n_p}^t$ are the $n_{cl}$-th cluster's central angle of
departure and bias angles of departure due to the angle spread,
correspondingly, while $\theta_{n_{cl}}^r$ and $\psi_{n_{cl},n_p}^r$
are the $n_{cl}$-th cluster's counterparts of the angles of
arrival. $f_d$ is the maximum Doppler shift, $\tau$ is the delay, and
$\mathbf{a}(\theta)$ is the array response vector whose generic
expression can be given by
\begin{equation}
	\mathbf{a}(\theta)=\frac{1}{\sqrt{N}}[1,e^{jk_od_a\pi \text{sin}(\theta)},...,e^{jk_od_a\pi (N-1)\text{sin}(\theta)}]^T,
\end{equation}
where $k_o =2\pi/\lambda_o$, $\lambda_o$ is the wavelength at the
operating frequency and $d_a$ is the antenna spacing. We assume that
there are 3 clusters and 5 rays within each cluster, i.e., totally 15
rays as in\cite{shi2018spectral}.  Besides, we limit
$\boldsymbol{\theta}_{n_{cl}}^t$ and $\boldsymbol{\theta}_{n_{cl}}^t$
in a range of $(-\frac{\pi}{8},\frac{\pi}{8})$ and set, unless
specified, $\sigma_{e,m}=\sigma_e=0.1,\, \forall m$.

We assume that there are $M=4$ users and each user is equipped with
$N_{d,m}=8$ antennas and $R_{d,m}=2$ RF chains, while the BS has
$N_s=32$ antennas and $R_s=8$ RF chains. We employ the $16$-QAM
modulation.  The number of data streams for each user is set to
$D_m=2,\forall m$, hence the number of data streams at the BS is
$D=MD_m=8$.  The level of noise variance is normalized to
$\sigma_m=1,\forall m$.  The signal-to-noise ratio (SNR) is defined as
$SNR=10\text{log}_{10}\frac{P_t}{\sigma_m^2}\text{dB}$.  We consider
the following algorithms for comparison:
\begin{itemize}
\item \textbf{Nonlinear joint}: The proposed BCD-based algorithm  (\textbf{Algorithm 1}) for the THP-based joint hybrid transceiver design.


\item \textbf{Nonlinear separate}: The analog BF matrices are first obtained by using the channel matching approach as in \cite{cai2019robust}. The THP-based digital processing matrices are optimized jointly.

\item \textbf{FD}: The proposed THP-based fully digital transceiver design algorithm.

\item \textbf{Linear joint}: The joint linear hybrid transceiver design algorithm proposed in \cite{shi2018spectral}.

\item \textbf{Linear separate}: The analog BF matrices are first obtained by using the channel matching approach as in \cite{cai2019robust}. The linear transceiver matrices are optimized jointly.

\item \textbf{ZF}: The analog BF matrices are first obtained by using the channel matching approach as in \cite{cai2019robust}. The digital transceiver matrices are designed based on the conventional zero-forcing (ZF) BF.

\item \textbf{Two-timescale joint}: The proposed TOSCA-based two-timescale joint design algorithm.
\end{itemize}

\textbf{Remark}: Except the TOSCA-based two-timescale joint design algorithm, all the analyzed designs are single-timescale algorithms.

\subsection{Single-timescale joint design algorithm}
\label{simu1}
\begin{figure}[!t]
\centering
\scalebox{0.65}{\includegraphics{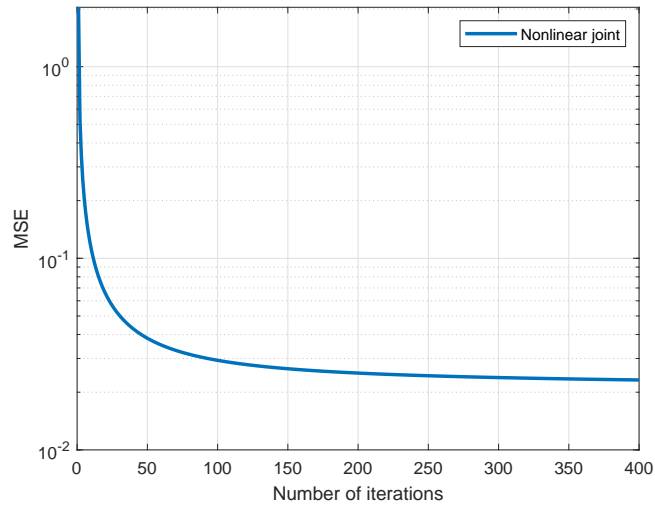}}
\caption{Convergence performance of the proposed algorithm ($SNR=20dB$).}\label{fig:simu1}
\end{figure}
We first study the convergence performance of this proposed
\textbf{Algorithm \ref{Proposed_algorithm}}.  Fig. \ref{fig:simu1}
shows the MSE performance versus the number of iterations, where the
$SNR$ is set to $20dB$. It can be observed that the objective value of
the optimization problem nearly converges within less than 300
iterations, which indicates the convergence behaviour of the proposed
BCD-based joint design algorithm. Moreover, the proposed algorithm
provides relatively low complexity due to the closed-form solutions in
each block.

\begin{figure}[!t]
\centering
\scalebox{0.6}{\includegraphics{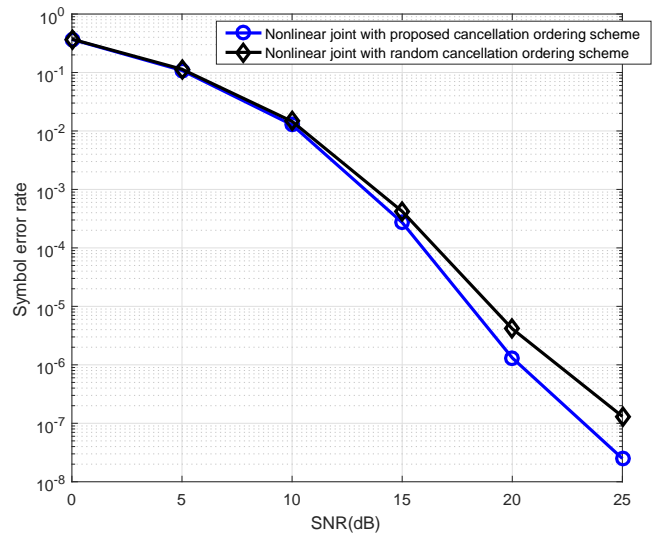}}
\caption{SER performance comparison of the proposed nonlinear hybrid transceiver design algorithm with different cancellation ordering schemes.}\label{fig:simu2}
\end{figure}
Then, we investigate the effect of different cancellation ordering schemes on the symbol error rate (SER) performance.
The proposed nonlinear transceiver design algorithm with the proposed
cancellation ordering scheme is compared with that with the random
cancellation ordering scheme\footnote{The cancellation order is
  generated randomly in this case.}.  Fig. \ref{fig:simu2} shows the
SER performance of the analyzed algorithms.  The results indicate that
the algorithm with the proposed cancellation ordering scheme provides
an almost $3dB$ gain at the SER level of $2\times 10^{-7}$ compared to
the one with the random cancellation ordering scheme, in particular
for the high SNR region, which verifies the effectiveness of the
proposed ordering scheme.

\begin{figure}[!t]
\centering
\scalebox{0.65}{\includegraphics{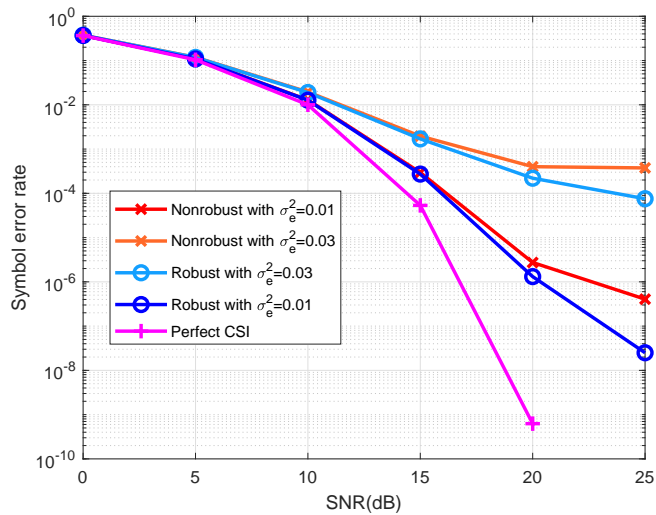}}
\caption{SER performance comparison of the proposed nonlinear hybrid transceiver design algorithm with different variances of CSI errors.}\label{fig:simu3}
\end{figure}

Next, we show the SER performance of the proposed BCD-based joint
design algorithm in the presence of different variances of CSI
errors. As shown in Fig. \ref{fig:simu3}, a smaller variance of CSI
errors leads to better SER performance for the proposed robust
nonlinear transceiver design algorithm and its nonrobust
counterpart.\footnote{The nonrobust transceiver design algorithm
  updates the optimization variables only based on the estimated CSI
  matrices $\{\mathbf{\bar{H}}_m\}$ without considering the channel
  estimation errors.} With the same channel estimation error variance,
the SER performance of the proposed robust design algorithm is always
better than that of the nonrobust design algorithm. Furthermore, the
gap of the SER performance between the robust and nonrobust design
algorithms increases as the increasing of SNR. The results verify the
robustness of the proposed nonlinear transceiver design algorithm.

\begin{figure}[!t]
\centering
\scalebox{0.65}{\includegraphics{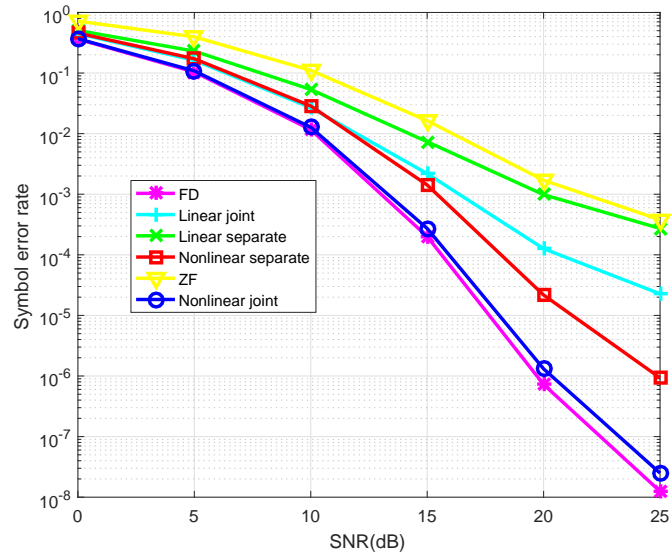}}
\caption{SER performance comparison for different transceiver design algorithms.}\label{fig:simu4}
\end{figure}

\begin{figure}[!t]
\centering
\scalebox{0.65}{\includegraphics{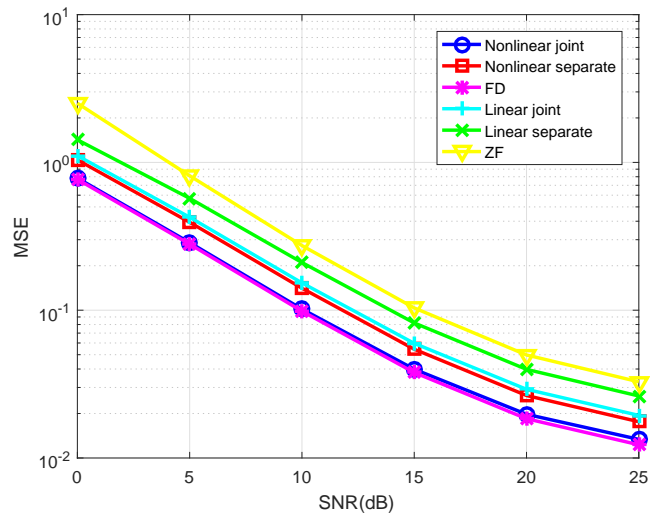}}
\caption{MSE performance comparison for different transceiver design algorithms.}\label{fig:simu5}
\end{figure}

Fig. \ref{fig:simu4} shows the SER performance versus SNR for
different transceiver design algorithms, including the ZF algorithm,
linear joint design algorithm, proposed nonlinear joint design
algorithm, linear separate design algorithm and nonlinear separate
design algorithm.  We observe that, as expected, the nonlinear
transceiver design algorithms provide better SER performance compared
to the linear transceiver design algorithms all the time due to the
successive interference suppression based preprocessing.  Besides, the
linear and nonlinear joint design algorithms significantly outperform
the linear and nonlinear separate design algorithms, respectively, due
to the joint optimization techniques.  Among the hybrid transceiver
design algorithms, the best performance is achieved by the proposed
nonlinear transceiver joint design algorithm followed by the nonlinear
transceiver separate design algorithm, the linear transceiver joint
design algorithm, the linear transceiver separate design algorithm and
the ZF algorithm.  The performance of the FD nonlinear transceiver
design algorithm is provided as a reference. We can see that the
proposed hybrid algorithm can approach the performance of the
performance of the FD transceiver design algorithm.  The corresponding
MSE performance is shown in Fig. \ref{fig:simu5} which coincides with
the results in Fig. \ref{fig:simu4}.

\subsection{Two-timescale joint design algorithm}
\label{simu2}
\begin{figure}[!t]
\centering
\scalebox{0.65}{\includegraphics{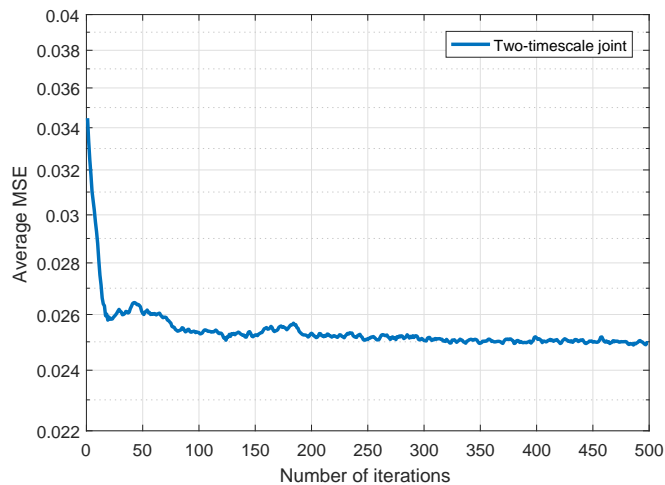}}
\caption{Convergence performance of the proposed two-timescale joint design algorithm $(SNR=20dB)$.}\label{fig:simu6}
\end{figure}

We illustrate the performance of the proposed two-timescale joint design algorithm (\textbf{Algorithm \ref{TTS_algorithm}}).
We assume that the CSI delay is proportional to the dimension of the channel matrices used to update the THP digital processing matrices and the analog BF matrices as in \cite{liu2016impact}. Hence we have
\begin{equation}
\frac{\tau}{\tau_{TTS}}=\frac{N_s\sum_{m=1}^MN_{d,m}}{R_s\sum_{m=1}^MR_{d,m}},\label{frac_TTS}
\end{equation}
where $\tau$ is the full CSI delay of the single-timescale algorithm
and $\tau_{TTS}$ is the effective CSI delay of the two-timescale
algorithm.  We first study the convergence performance of the proposed
two-timescale joint design algorithm under the setting $SNR = 20dB$,
$\sigma_e^2 = 0.01$ and $\tau = 1ms$.  $\tau_{TTS}$ can be computed
based on \eqref{frac_TTS}, which is given by $0.0625ms$. For
simplicity, we omit the computation of $\tau_{TTS}$ in the following
experiments.  Fig. \ref{fig:simu6} illustrates the average MSE versus
the number of iterations. We can see that the proposed two-timescale
algorithm converges within 300 iterations.

\begin{figure}[!t]
\centering
\scalebox{0.57}{\includegraphics{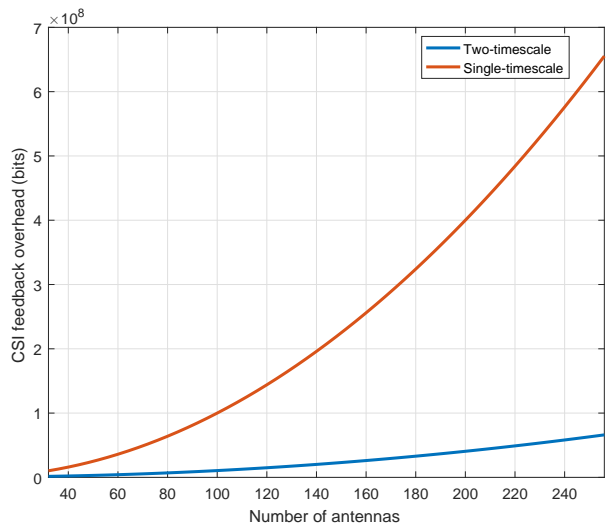}}
\caption{Comparison of the CSI feedback overhead between the single-timescale design algorithm and the two-timescale joint design algorithm.}\label{fig:simu7}
\end{figure}

Then, let us compare the CSI feedback overhead for the
single-timescale algorithm and the two-timescale algorithm.  Let $B$
denote the number of quantization bits needed for each element of the
CSI matrices.  Then the expression of CSI feedback overhead for the
two-timescale algorithm in a super-frame is given by
$T_fN_s\sum_{m=1}^MN_{d,m}+T_f(T_s-1)R_s\sum_{m=1}^MR_{d,m}$.
Similarly, we can obtain the counterpart of the single-timescale
algorithm in a super-frame as
$T_fT_sN_s\sum_{m=1}^MN_{d,m}$. Fig. \ref{fig:simu7} shows the CSI
feedback overhead of the two-timescale and single-timescale algorithms
where $T_f=1000$, $T_s=10$ and $N_s=\sum_{m=1}^MN_{d,m}=N_a$. $N_a$
denotes the number of antennas. This is consistent with the LTE
standard \cite{viering2002spatial}, i.e., the channel statistics
coherence time and the channel coherence time are 10s and 1ms,
respectively.  We can conclude from Fig. \ref{fig:simu7} that the
two-timescale algorithm has a significantly lower CSI feedback
overhead than the single-timescale algorithm.

\begin{figure}[!t]
\centering
\scalebox{0.65}{\includegraphics{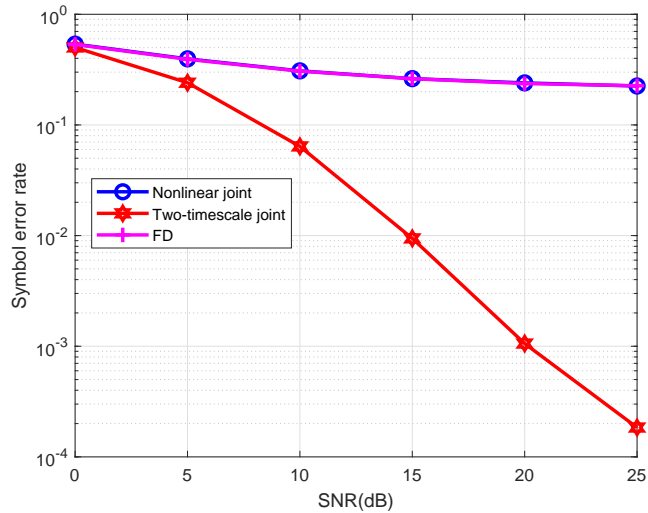}}
\caption{SER performance comparison for different analyzed algorithms under the CSI delay $\tau=5ms$.}\label{fig:simu8}
\end{figure}

Fig. \ref{fig:simu8} shows the SER performance for different analyzed
transceiver design algorithms under the CSI delay $\tau=5ms$,
including the proposed single-timescale joint design algorithm, the
proposed two-timescale joint design algorithm and the FD algorithm.
The two-timescale joint design algorithm provides the best SER
performance while the other algorithms provide almost the same SER
performance.  With the increasing of SNR, the performance gap between
the two-timescale algorithm and the single-timescale algorithm becomes
larger.

In Fig. \ref{fig:simu9}, we show the SER performance of the proposed
single-timescale joint design algorithm, the proposed two-timescale
joint design algorithm and the FD algorithm versus the CSI delay
$\tau$.  It can be observed from Fig. \ref{fig:simu9} that with the
increasing of $\tau$, the performance of the single-timescale
algorithms degrades dramatically while the performance of the
two-timescale algorithm varies slightly.  The two-timescale algorithm
starts to outperform the single-timescale algorithms at the delay of
2ms.  This is mainly because the two-timescale algorithm has much
lower feedback overhead and therefore it creates much smaller CSI
delay.  Moreover, the proposed single-timescale algorithm provides
better performance compared to the proposed two-timescale algorithm in
the presence of smaller delays.  The results verify the robustness of
our proposed TOSCA-based two-timescale joint design algorithm.  This
verifies the effectiveness of the proposed two-timescale algorithm
against the CSI mismatch caused by the delays.

\begin{figure}[!t]
\centering
\scalebox{0.65}{\includegraphics{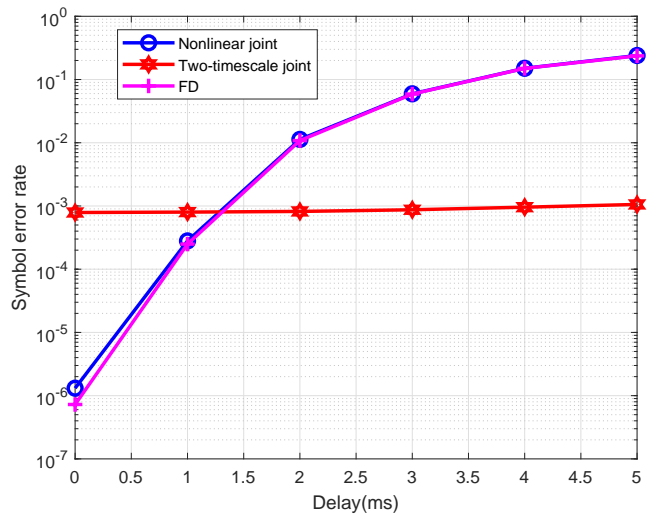}}
\caption{SER performance of different analyzed algorithms versus $\tau$ $(SNR=20 dB)$.}\label{fig:simu9}
\end{figure}

\section{Conclusion}
\label{conclusion}
In this work, we conceived the THP-based joint nonlinear hybrid A/D
transceiver design algorithms for the downlink multiuser MIMO mmWave
systems, where we considered the minimization of MSE subject to the
transmit power constraint and the unit modulus constraint on each
element of the RF analog BF matrices.  Due to the highly coupled
constraints, this optimization problem is hard to tackle.  We first
transformed it into a simpler form and then developed an innovative
BCD-based algorithm to solve it.  Besides, we proposed a novel
TOSCA-based two-timescale joint design algorithm to further reduce the
CSI signalling overhead and the effects of outdated CSI caused by the
severe delays.  These proposed algorithms can be guaranteed to obtain
the KKT solution of the original problem.  Moreover, with the aid of
the lower bound of the MSE, we also determined the near-optimal
cancellation order for the THP structure.  Our simulation results
demonstrated that the proposed THP-based hybrid transceiver design
algorithm can significantly outperform the existing linear hybrid
transceiver design algorithms and that the two-timescale joint design
algorithm has stronger robustness against the CSI delay than the
single-timescale algorithms.  Hence, the proposed BCD-based joint
design algorithm should be employed for the scenario of small CSI
delays, while the extended two-timescale joint design algorithm should
be applied for the case of severe CSI delays.

\begin{appendices}
\section{Proof of Theorem \ref{sigma_trans}}
\label{proof_sigma_trans}
It is readily seen that the KKT solution of problem
\eqref{original_problem} always makes the power constraint meet
equality.  By checking the first order conditions for problem
\eqref{trans_problem} and problem \eqref{original_problem}, it is
obvious that the scaled solution $\mathcal{\bar{S}}^{\star}$ is a KKT
solution of problem \eqref{trans_problem} and we can obtain the
following equations via comparing the derivatives of $\text{MSE}$ and
$\text{MSE}_{\sigma}$ at the point $\mathcal{\bar{S}}^{\star}$,
\begin{subequations}
	\begin{align*}
		&\frac{1}{\sigma_m}\frac{\partial\text{MSE}}{\partial \mathbf{P}_m^*}|_{\mathcal{\bar{S}}^{\star}}=\frac{\partial\text{MSE}_{\sigma}}{\partial \mathbf{P}_m^*}|_{\mathcal{\bar{S}}^{\star}}=\mathbf{0},\forall m,\\
		&\frac{\partial\text{MSE}}{\partial \mathbf{F}_m^*}|_{\mathcal{\bar{S}}^{\star}}+\boldsymbol{\lambda}_{F,m}\circ \mathbf{F}_{m}^{\star}=\frac{\partial\text{MSE}_{\sigma}}{\partial \mathbf{F}_m^*}|_{\mathcal{\bar{S}}^{\star}}+\boldsymbol{\lambda}_{F,m}\circ \mathbf{F}_{m}^{\star}=\mathbf{0},\forall m,\\
		&\frac{\partial\text{MSE}}{\partial \mathbf{T}^*}|_{\mathcal{\bar{S}}^{\star}}+\boldsymbol{\lambda}_{T}\circ \mathbf{T}^{\star}+\sum_{m=1}^M\frac{1}{P_t}\|\mathbf{\bar{P}}_m^{\star}\mathbf{F}_m^{\star}\|^2\mathbf{T}^{\star}\mathbf{\bar{W}}^{\star}(\mathbf{\bar{W}}^{\star})^H=\frac{\partial\text{MSE}_{\sigma}}{\partial \mathbf{T}^*}|_{\mathcal{\bar{S}}^{\star}}+\boldsymbol{\lambda}_{T}\circ \mathbf{T}^{\star}=\mathbf{0},\\
		&\frac{\partial\text{MSE}}{\partial \mathbf{W}^*}|_{\mathcal{\bar{S}}^{\star}}+\sum_{m=1}^M\frac{1}{P_t}\|\mathbf{\bar{P}}_m^{\star}\mathbf{F}_m^{\star}\|^2(\mathbf{T}^{\star})^H\mathbf{T}^{\star}\mathbf{\bar{W}}^{\star}=\frac{\partial\text{MSE}_{\sigma}}{\partial \mathbf{W}^*}|_{\mathcal{\bar{S}}^{\star}}=\mathbf{0},\\
		&\frac{\partial\text{MSE}}{\partial \mathbf{U}^*}|_{\mathcal{\bar{S}}^{\star}}=\frac{\partial\text{MSE}_{\sigma}}{\partial \mathbf{U}^*}|_{\mathcal{\bar{S}}^{\star}}=\mathbf{0},\\
		&|[\mathbf{F}_m]_{i,j}^{\star}|^2=1 \quad \forall m,i,j,\\	
		&|[\mathbf{T}]_{i,j}^{\star}|^2=1 \quad \forall i,j,\\
		&\|\mathbf{\bar{T}}\mathbf{\bar{W}}\|^2=P_t.
	\end{align*}
\end{subequations}
The above equations indicates that $\mathcal{S}^{\star}$ is a KKT solution of problem \eqref{original_problem} with the Lagrange multiplier attached to the power constraint being $\lambda_{P,T}=\sum_{m=1}^M\frac{1}{P_t}\|\mathbf{\bar{P}}_m^{\star}\mathbf{F}_m^{\star}\|^2$. This completes the proof.

\section{Derivation of gradients}
\label{derivation_of_gradients}
The partial derivatives with respect to the phase matrices can be
associated with the partial derivatives with respect to the analog BF
matrices by the following equations
\begin{equation}
	\frac{\partial g}{\partial \boldsymbol{\theta}_T}=\frac{\partial g}{\partial \mathbf{T}}\circ 1j\mathbf{T}-\frac{\partial g}{\partial \mathbf{T}^*}\circ 1j\mathbf{T}^*,\label{theta_T_partial}
\end{equation}
and
\begin{equation}
	\frac{\partial g}{\partial \boldsymbol{\theta}_{F_m}}=\frac{\partial g}{\partial \mathbf{F}_m}\circ 1j\mathbf{F}_m-\frac{\partial g}{\partial \mathbf{F}_m^*}\circ 1j\mathbf{F}_m^*.\label{theta_Fm_partial}
\end{equation}
Besides we have the expressions of the partial derivatives with respect to the analog BF matrices
\begin{equation}
\begin{aligned}
\frac{\partial g}{\partial \mathbf{T}^*} =& \sum_{m=1}^M\mathbf{\bar{H}}_m^H\mathbf{F}_m^H\mathbf{P}_m^H\mathbf{P}_m\mathbf{F}_m\mathbf{\bar{H}}_m\mathbf{TWW}^H+\sum_{m=1}^M\sigma_{e,m}^2\text{tr}(\mathbf{P}_m\mathbf{F}_m\mathbf{F}_m^H\mathbf{P}_m^H)\mathbf{TWW}^H\\
& -\sum_{m=1}^M\mathbf{\bar{H}}_m^H\mathbf{F}_m^H\mathbf{P}_m^H\mathbf{A}_m\mathbf{L}^T\mathbf{U}\mathbf{W}^H,
\end{aligned}\label{T_partial}
\end{equation}
and
\begin{equation}
\begin{aligned}
\frac{\partial g}{\partial \mathbf{F}_m^*} =& \mathbf{P}_m^H\mathbf{P}_m\mathbf{F}_m\mathbf{\bar{H}}_m\mathbf{T}\mathbf{WW}^H\mathbf{T}^H\mathbf{\bar{H}}_m^H+{\sigma}_{e,m}^2\text{tr}(\mathbf{TWW}^H\mathbf{T}^H)\mathbf{P}_m^H\mathbf{P}_m\mathbf{F}_m\\
&+\sigma_m^2\mathbf{P}_m^H\mathbf{P}_m\mathbf{F}_m-\mathbf{P}_m^H\mathbf{A}_m\mathbf{L}^T\mathbf{U}\mathbf{W}^H\mathbf{T}^H\mathbf{\bar{H}}_m^H.
\end{aligned}\label{Fm_partial}
\end{equation}
Finally, we can obtain the partial derivatives $\frac{\partial
  g}{\partial \boldsymbol{\theta}_T}$ and $\frac{\partial g}{\partial
  \boldsymbol{\theta}_{F_m}}$ by substituting \eqref{T_partial} and
\eqref{Fm_partial} into \eqref{theta_T_partial} and
\eqref{theta_Fm_partial}, respectively.

\end{appendices}

\normalem

\end{document}